\newtheorem{theorem}{Theorem}
\newtheorem{corollary}{Corollary}
\begin{document}
 
\title{On Coverage Probability With Type-II HARQ in Large Uplink Cellular Networks \vspace{2mm}}

 \author{ Xiao Lu, Ekram Hossain, Hai Jiang, and Guangxia Li \vspace{-5mm} \thanks{X. Lu, and H. Jiang are with the Department of ECE, University of Alberta, Canada (emails: hai1@ualberta.ca). E. Hossain is with the Department of ECE, University of Manitoba (email: Ekram.Hossain@umanitoba.ca). Guangxia Li is with the School of Computer Science and Technology, Xidian University, China (email:gxli@xidian.edu.cn). This work was done when the first author visited the University of Manitoba, Canada}}  
%	\markboth{}{Shell \MakeLowercase{\textit{et al.}}: Bare Demo of	IEEEtran.cls for Journals}
	%\vspace{-8mm}
\maketitle

\begin{abstract} 
This letter studies uplink transmission in large-scale cellular networks with a Type-II hybrid automatic repeat request (HARQ) retransmission scheme, under which an unsuccessful transmission (if occurs) is combined with the corresponding retransmission through maximum-ratio combining (MRC) for decoding. Based on stochastic geometry analysis, the uplink coverage probabilities are characterized under a generalized power control scheme in the scenarios with quasi-static interference (QSI) and fast-varying interference (FVI), where the same or different interfering users are present during the transmission and retransmission phase, respectively. Our analytical expressions reveal some scaling properties of the coverage probabilities and can be used to evaluate the diversity gain of MRC (i.e., the ratio of the required signal-to-interference ratio (SIR) to achieve a target coverage probability with MRC to that without MRC).  We show that the diversity gain of MRC is more remarkable in the scenario with QSI compared to that with FVI. Moreover, the diversity gain of MRC can be mostly exploited by adopting full channel-inversion power control.  

\end{abstract}

\begin{IEEEkeywords}
Stochastic geometry, coverage probability, Type-II HARQ,  maximal-ratio combining, retransmission diversity, diversity gain, uplink power control.
\end{IEEEkeywords}
 
%\vspace{-3mm}
\section{Introduction}
%\vspace{-1mm}

Wireless communications are subjected to performance degradation caused by fading and interference fluctuations. Retransmission diversity is an essential approach used in
wireless systems to combat such performance degradation and provide reliable data transfer. 
Hybrid automatic repeat request (HARQ) is a key retransmission technique adopted
in many existing standards, like 3GPP-LTE and WiMAX. Type-I HARQ and Type-II HARQ with chase combining (HARQ-CC)~\cite{G.2015Nigam} are two major HARQ techniques. The former decodes the received signal of each transmission independently while the latter combines the received signals of different transmissions with maximum-ratio combining (MRC) for decoding.

Analysis of retransmission schemes in large-scale networks is fundamental to the understanding of retransmission diversity in practical communication systems.
Only a few prior works have investigated HARQ schemes in large-scale cellular networks.  
Retransmission diversity loss in static networks is a key observation reported in the existing literature~\cite{G.2015Nigam,M.Nov.2013Haenggi,H.Dec.2016Afify}. The diversity loss is due to temporal interference correlation because the initial transmission and the corresponding retransmission are affected by the same interferers. 
References~\cite{G.2015Nigam} and \cite{M.2017Sheng} study both Type-I and Type-II HARQ-CC in downlink heterogeneous networks with and without base station (BS) cooperation, respectively. The focus of~\cite{H.Dec.2016Afify} is a unified analysis of Type-I HARQ in downlink cellular networks with different multiple-antenna configurations.  
These works highlight the presence of retransmission diversity loss, however, do not quantify its effect. 
The authors in~\cite{J.July2018Wen} 
characterize the interference correlation in terms of Pearson's correlation coefficient~\cite{M.2013Haenggi} in a cluster Poisson network with Type-I HARQ and find that interferer clustering increases the interference correlation. Different from the above literature that target downlink transmissions without power control, \cite{R.2018Arshad} studies Type-I HARQ in uplink cellular networks under fractional power control (FPC). To the best of our knowledge, none of the existing literature has investigated Type-II HARQ-CC in uplink cellular networks.

In this letter, we analyze the
effects of power control on the transmission diversity loss and the diversity gain of MRC under Type-II HARQ in large-scale uplink cellular networks.  
We consider two practical types of interference, namely, the quasi-static interference (QSI) and the fast-varying interference (FVI), which are suitable for modeling 
static/low-mobility and high-mobility scenarios, respectively~\cite{Z.Feb.2009Win}.  
Specifically, in the former and latter scenarios, the initial transmission and retransmission (if there is) are affected by the same and different set of the interferers, respectively. As a result, the interference imposed on transmission and retransmission are temporarily correlated and independent, respectively. Our developed analytical framework reveals some scaling properties and provides insights into the effects of network parameters (e.g. power control parameters) on the uplink coverage performance. 

 The rest of the letter is organized as follows. Section II presents the system model and assumptions. The analytical results on the uplink coverage probability with Type-II HARQ-CC are presented in Section III. Section IV presents the numerical results and Section V concludes the letter.
 
\vspace{-1mm}
\section{System Model and Assumptions}
%\vspace{-1mm}

We consider an uplink Poisson cellular network, where the BSs $\Phi_{B}$ are distributed as a homogeneous Poisson point process (PPP) with spatial density $\zeta_{B}$ and each BS is associated with a uniformly distributed user on each universally reused resource block, referred to as the user model of type I in~\cite{M.April2017Haenggi}. Note that  
due to the correlation of neighboring Voronoi cells of the BSs, the resulted user point process $\Phi_{U}$ for each resource block is not a PPP.
Without loss of generality, we analyze the uplink transmission performance of a typical user, denoted as $0$. The serving BS of the typical user, named tagged BS, is located at the origin.  
We employ the power-law path-loss model with path-loss with exponent $\alpha >2$ and small-scale fading. The fading gain, denoted as $h_{i}$, between a user $i \in \Phi_{U}$ and its serving BS, is assumed to be an independent and identically distributed (i.i.d.) exponential random variable with unit mean.  
Moreover, $\mathbf{h}=\{h_{i}| i \in \Phi_{U} \}$ are assumed to be i.i.d. and vary across different transmission attempts.

A generalized fractional power control (GFPC) scheme~\cite{D2016Renzo} is adopted for uplink transmission, which sets the transmit power of any user $i \in \Phi_{U}$ as  
\begin{equation}\label{PC}  
	P_{i} =  
	\begin{cases}
\varrho l^{\alpha\epsilon}_{i}   &   \text{ if }  \varrho l^{\alpha\epsilon}_{i}  \leq  \widehat{P}  ,\\
	\bar{P} & \text{ otherwise} ,
	\end{cases}   %\vspace{-2mm}
\end{equation}
where $\varrho$ is the baseline transmit power,  $l_{i}$ denotes the link distance between user $i$ and its associated BS, $\epsilon \in [0,1]$ is the path-loss compensation exponent (PCE), $\widehat{P}$ and $\bar{P}$, respectively, denotes the maximum transmit power and the enforced transmit power if $P_{i}$ exceeds $\widehat{P}$.
Note that (\ref{PC}) generalizes several power control schemes of interest. Specifically, the GFPC is equivalent to i) FPC  \cite{Y.2017Wang} when $\widehat{P}  = \infty$; ii) truncated fractional power control (TFPC) \cite{H.2014ElSawy} when $\widehat{P} < \infty$ and $\bar{P}=0$; iii) full channel inversion power control (FCIPC) when $\epsilon=1$ and $\widehat{P}=\infty$ \cite{Y.2017Wang}; and iv) no power control (NPC) when $\epsilon=0$.

The cellular network uses a Type-II HARQ-CC as the retransmission scheme.  
Specifically, the tagged BS requests the retransmission, denoted as $\mathrm{R}$, if the receive signal-to-interference ratio (SIR) of the initial transmission, denoted as $\mathrm{T}$, is below a pre-defined threshold $\tau$. Upon receiving the retransmission, the tagged BS performs decoding from the combined signals of the two transmissions based on MRC. 
With the GFPC, the uplink SIR at the tagged BS either %where $\mathrm{T}$ and $\mathrm{R}$ denote 
for an initial transmission or the corresponding retransmission (if occurs) is calculated as 
\begin{align} 
\hspace{-2mm} \eta^{(t)} \!=\! \frac{ P_{0} h_{0} d^{-\alpha}_{0} }{ \sum_{j \in \Phi_{\!I} } \! \! P_{j} h_{j} d^{-\alpha}_{j} } \! \overset{(a)}{=} \! \frac{  h_{0}  l^{\alpha(1-\epsilon)}_{0}     }{ \sum_{j \in \Phi_{\!I} } \! l^{\alpha \epsilon }_{j}  h_{j}  d^{-\alpha}_{j} }, \, t \!\in\! \{\mathrm{T}, \mathrm{R}\},  \vspace{-5mm}
\end{align} 
where $d_{j}$ denotes the distance between user $j$ and the tagged BS, $\Phi_{I}:=\Phi_{U} \backslash \{0\}$, and $(a)$ follows as $l_{0}=d_{0}$.

Let $ \!\Phi^{(t)}_{\!B}, \Phi^{(t)}_{u} \!$ represent the realizations of the BS and the user point processes, respectively, during a transmission attempt $t \in \{ \mathrm{T},  \mathrm{R} \}$ of the typical user.
Note that $\Phi^{(\mathrm{T})}_{\!B}$ ($\Phi^{(\mathrm{T})}_{\!u}$) is identical to and different from $\Phi^{(\mathrm{R})}_{\!B}$ ($\Phi^{(\mathrm{R})}_{\!u}$), in the scenarios with QSI and FVI, denoted as $\mathrm{Q}$ and $\mathrm{F}$, respectively. In the following, the superscript $(t)$ is dropped for the scenario with QSI and kept for the scenario with FVI to indicate the identity and difference of the network realizations, respectively.   
Given the type of interference experienced, i.e. either $\mathrm{Q}$ or $\mathrm{F}$, the coverage probability  
is defined as
\begin{align} \label{def}
  \mathbf{C}^{\mathrm{A}} %_{\mathrm{GFPC}}  
  %\! \overset{\text{(a)}}{=}   
  \!\! := \! \underbrace{ 
 \mathbb{P} \big[ \eta^{(\mathrm{T})} \!>  \!  \tau %| \Xi^{\mathrm{A}} 
 \big] %\mathbbm{1}_{ \{\eta^{(\mathrm{T})}   >    \tau  \} }  
  }_{:=\mathbf{C}^{(\mathrm{T})}}   %\!  \nonumber \\
 %& \hspace{25mm} 
 \! +     %\underbrace{
 \mathbb{P} \big[ \eta^{(\mathrm{R})} \! > \!  \tau \! - \! \eta^{(\mathrm{T})}\! , \eta^{(\mathrm{T})}  \! \leq  \!   \tau | \Xi^{\mathrm{A}} \big],  %\mathbbm{1}_{ \{\eta^{(\mathrm{T})}   \leq    \tau  \} }  ],     
\end{align} 
where $\!\mathrm{A} \! \in \! \{ \mathrm{Q},\mathrm{F} \}\!$ denotes the interference type indicator, $\mathbb{P}[Z]$ represents the probability that the event $Z$ happens, $\mathbf{C}^{(\mathrm{T})} $ represents the coverage probability of the initial transmission,  $\Xi^{\mathrm{Q}} :=\{  \Phi_{\!B}, \Phi_{\!u} \}$ and  $\Xi^{\mathrm{F}} \! : = \! \big \{   \Phi^{(\mathrm{T})}_{\!B} , \Phi^{(\mathrm{T})}_{\!u}, \Phi^{(\mathrm{R})}_{\!B} , \Phi^{(\mathrm{R})}_{\!u} \big \}$ are the sets of network realizations in the scenarios with QSI and FVI, respectively.  
 
%\vspace{-2mm}
\section{Analysis of Uplink Coverage Probability}
%\vspace{-1mm}

This section characterizes the expectation of the coverage probability of a typical uplink user 
considering different settings of the power control parameters for scenarios with QSI and FVI. We start with the general results as follows. 

%T1
\begin{theorem} \label{Thm1}  
The uplink coverage probability with GFPC can be closely approximated by (\ref{eqn:M_II}) shown on the top of the next page, %(\ref{eqn:M_II}),
\begin{figure*}
\begin{align} \label{eqn:M_II}    %C_GFPC
& \hspace{-3mm} \widetilde{\mathbf{C}}^{\mathrm{A}}_{\mathrm{GFPC}} \!\! =  \! \! 
\begin{dcases}   
 & \hspace{-4mm} \widetilde{\mathbf{C}}^{(\mathrm{T})}_{\mathrm{GFPC}} \! + \! \! \int^{\tau}_{0} \!\!\! \Bigg( \!\! \int^{B_{\!\widehat{P}} }_{0} \!\!\! \xi_{\tau,\eta,t} \Big(  \frac{u^{ \! \frac{\alpha \epsilon }{2}    } }{  v^{ \! \frac{\alpha}{2}  } }, \frac{B^{ \! \frac{\alpha \epsilon}{2}   } _{\!\bar{P}} }{t^{ \! \frac{\alpha  \epsilon}{2}  } \! v^{\frac{\alpha}{2} }}  \Big)  \mathrm{d}t    + \int^{\infty}_{B_{\!\widehat{P}} } \!\!  
\xi_{\tau,\eta,t} \Big(  \frac{  u^{\!\frac{\alpha \epsilon}{2}}t^{\frac{\alpha \epsilon}{2}  }  }{B^{\frac{\alpha \epsilon}{2}  } _{\!\bar{P}} \! v^{\!\frac{\alpha}{2} }  },v^{-\!\frac{\alpha}{2} }  \Big) 
\mathrm{d}  t \! \Bigg) \mathrm{d}  z, \hspace{35mm} \mathrm{A} = \mathrm{Q}   \\
& \hspace{-4mm} \widetilde{\mathbf{C}}^{(\mathrm{T})}_{\mathrm{GFPC}} \! + \! \! \int^{\tau}_{0} \!\!\! \Bigg( \!\!   \int^{B_{\!\widehat{P}}}_{0} \!\!\! \exp \! \bigg( \!  \! - \! t     -     \omega_{\tau-\eta,0,t} \Big( \frac{u^{ \! \frac{\alpha \epsilon }{2}    } }{  v^{ \! \frac{\alpha}{2}  } }\!, \! \frac{B^{ \! \frac{\alpha \epsilon}{2}   } _{\!\bar{P}} }{t^{ \! \frac{\alpha  \epsilon}{2}  } \! v^{\frac{\alpha}{2} }} \! \Big)   \!   \bigg) \mathrm{d}t \!    \int^{B_{\!\widehat{P}} }_{0} \!\!\!\!  \xi_{\eta,\eta,t} \Big(\frac{u^{ \! \frac{\alpha\epsilon}{2} }}{v^{ \! \frac{\alpha}{2} }}, \frac{B^{ \frac{\alpha \epsilon}{2}   }_{\!\bar{P}}}{  t^{ \! \frac{\alpha \epsilon}{2}    } \! v^{\!\frac{\alpha}{2} }} \! \Big)  
\mathrm{d}t  \!  + \! \!  \int^{\infty}_{\!B_{\!\widehat{P}} } \!\!  
\xi_{\eta,\eta,t} \Big(  v^{-\!\frac{\alpha}{2} } \!, \!  \frac{  u^{\!\frac{\alpha \epsilon}{2}}t^{\frac{\alpha \epsilon}{2}  }  }{B^{\frac{\alpha \epsilon}{2}  } _{\!\bar{P}} \! v^{\!\frac{\alpha}{2} }  }\Big) 
\mathrm{d}  t  \\
& \hspace{70mm} \times \!  \int^{\infty}_{\!B_{\!\widehat{P}} } \! \exp \! \bigg(  \! \! - \! t - \!    \omega_{\eta,\eta,t} \Big ( \frac{  u^{\!\frac{\alpha \epsilon}{2}} \! t^{\frac{\alpha \epsilon}{2}  }  }{B^{\!\frac{\alpha \epsilon}{2}  } _{\!\bar{P}} \! v^{\!\frac{\alpha}{2} }  }  ,   v^{-\frac{\alpha}{2} }   \! \Big)   \!  \bigg) \mathrm{d} t \Bigg) \mathrm{d}  z,  \hspace{2mm} \mathrm{A} = \mathrm{F} 
\end{dcases}\hspace{-3mm}
\end{align} 
\hrulefill
\end{figure*}
where $\widetilde{\mathbf{C}}^{(\mathrm{T})}_{\mathrm{GFPC}} $ is %given in (\ref{eqn:M_II}), 
\begin{align}  \label{eqn:S}   %C_S 
 & \widetilde{\mathbf{C}}^{(\mathrm{T})}_{\mathrm{GFPC}}  \! =   \!  %\begin{dcases}     
      \int^{ B_{\!\widehat{P}} }_{0} \!\!\!  \exp \! \bigg(\! \! - \! t     -      \omega_{\tau,0,t}   \Big( %\tau 
  \frac{u^{\frac{\alpha \epsilon}{2}  }}{ v^{ \frac{\alpha}{2} }} , %\tau
  \frac{B^{\frac{\alpha\epsilon}{2} }_{\!\bar{P}}  }{v^{\frac{\alpha}{2}} t^{\frac{\alpha \epsilon}{2} }} \!
  \Big)      \!  \bigg) \mathrm{d}t \nonumber \\
  & \hspace{15mm} + \int^{\infty}_{\!B_{\!\widehat{P}}} \!  \exp \! \bigg( \!\! - \!   t     -    \omega_{\tau,0,t} \Big(   \frac{u^{\!\frac{\alpha\epsilon}{2} } t^{\frac{\alpha\epsilon}{2} } }{ B^{\frac{\alpha\epsilon}{2}}_{\!\bar{P} } \!  v^{\frac{\alpha}{2}}  } \!, %\tau 
     v^{-\frac{\alpha}{2}}  \!
    \Big)  \! \Big]  \! \bigg) \mathrm{d}t    ,     
\end{align}  
and $\!B_{a} \! = \! C_{2} \pi \zeta_{B} (\frac{a}{\varrho})^{\frac{2}{\alpha \epsilon}}  $,  $C_{1}\!=\!\frac{12}{5}$, $C_{2}\!=\!\frac{5}{4}$,  $\omega$ and $\xi$ are given, respectively, as  
%omega 
\begin{align} 
&  \hspace{3mm} \omega_{a,b,c}  (x,y) \! = \!   \mho_{0 , c } \Big(\!  Q_{ a,b} \big( 0,    x   \big) \! \Big)\! + \!\,    \mho_{\!\frac{ B_{\!P}}{c} \!, c } \Big( Q_{ a,b}  \big( x, y   \big) \! \Big), \label{omega} \\
& \hspace{-2mm} \text{and}, \,\, \xi_{a,b,c}  (x,y) \! = \! \Big(  \mho_{0, c} ( G_{\!a,b}  (  x ,0     
 ) \big)  \! + \! \mho_{\!\frac{B_{\!P }}{c} \! , c} \big( G_{\! a, b}  (y, x  )   \big) \! \Big)   \nonumber \\
 & \hspace{40mm} \times \exp \! \Big(   \! - t-  
 \omega_{a,b,c} \big( x ,   y \big)   \! %\Big] \!   
 \Big), \label{xi}
\end{align}
where $G_{a,b}(x,\!y) \! := \! x f_{a-b}(x) f_{b}(x)^2  \! -    y f_{a-b}(y) f_{b}(y)^2$, $ Q_{a,b}(x,y) \! :=  \! f_{a-b} ( x ) f_{b}  ( x)  \! - \!   f_{a-b} ( y ) f_{b}  ( y)  $, $f_{a}(x)\! := \!\frac{1}{1+ax}$, $   
\mho_{a,b } (x)  \! \!   :=  \! \! \frac{b^2}{C_{2}}    \int^{\infty}_{a} \int^{ v }_{a} \!  x \varepsilon_{b} (u,v) \mathrm{d} u \mathrm{d} v $, therein $ \varepsilon_{b} (u,v)\! :=\! e^{-    ub} \big(1-e^{-\frac{C_{1}}{C_{2}} vb}\big) (1-e^{- v b})^{-1}$.
\end{theorem}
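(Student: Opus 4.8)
The plan is to obtain both branches of (\ref{eqn:M_II}) directly from the definition (\ref{def}), by combining the Rayleigh structure of the desired signal with a probability-generating-functional (PGFL) characterization of the aggregate interference, after replacing the exact type-I serving-link distance law with its Poisson--Voronoi approximation. I would first normalize the serving distance through $t=C_2\pi\zeta_B l_0^2$, so that this approximation makes $t$ a unit-mean exponential and supplies the ubiquitous $e^{-t}$ weight. The power-control rule (\ref{PC}) then splits the $t$-axis at $t=B_{\widehat P}$ into the GFPC regime $\varrho l_0^{\alpha\epsilon}\le\widehat P$ and the capped regime, which is exactly the origin of the two outer integrals $\int_0^{B_{\widehat P}}$ and $\int_{B_{\widehat P}}^{\infty}$ and of the two distinct arguments fed into $\omega$ and $\xi$.

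For the initial-transmission term $\mathbf C^{(\mathrm T)}$ I would condition on the geometry, use $h_0\sim\mathrm{Exp}(1)$ to write $\mathbb P[\eta^{(\mathrm T)}>\tau\mid\cdot]=\exp(-\tau I/l_0^{\alpha(1-\epsilon)})$, and then evaluate $\mathbb E[\exp(-sI)]$ with $s=\tau/l_0^{\alpha(1-\epsilon)}$ via the PGFL. Each interferer contributes $\tfrac{1}{1+s w_j}$ with $w_j=l_j^{\alpha\epsilon}d_j^{-\alpha}$; after normalizing the distances the per-interferer PGFL contribution becomes $1-f_{\tau}(x)=Q_{\tau,0}(0,x)$ with $x=u^{\alpha\epsilon/2}v^{-\alpha/2}$, and averaging over the (approximated) joint law of $(l_j,d_j)$ collapses into the double integral defining $\mho$ with kernel $\varepsilon_b$. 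Splitting this integral according to whether the interferer is itself power-capped produces $\mho_{0,c}(Q_{\tau,0}(0,x))+\mho_{B_P/c,c}(Q_{\tau,0}(x,y))=\omega_{\tau,0,t}$, hence (\ref{eqn:S}).

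For the HARQ term I would rewrite the event as $\{\eta^{(\mathrm T)}=z,\ \eta^{(\mathrm R)}>\tau-z\}$ and integrate over the realized value $z\in[0,\tau]$ of $\eta^{(\mathrm T)}$ (denoted $\eta$ in the function subscripts), so the integrand becomes the density of $\eta^{(\mathrm T)}$ at $z$ times the complementary CDF of $\eta^{(\mathrm R)}$ at $\tau-z$. Differentiating the interference Laplace transform in the initial threshold marks a single interferer through Campbell's formula and is precisely what upgrades the $Q$-contribution to the $G$-contribution $G_{a,b}(x,y)=x f_{a-b}(x)f_b(x)^2-y f_{a-b}(y)f_b(y)^2$ appearing in $\xi$. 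In the FVI scenario the two realizations are independent, so the $z$-integrand factors into a pure $\omega$-type CCDF integral at threshold $\tau-z$ (whence the subscript $\tau-\eta$) and a $\xi$-type density integral (subscript $\eta,\eta$), which reproduces the product form of the $\mathrm A=\mathrm F$ branch. In the QSI scenario the fading is still independent but the interferer positions are shared, so conditioning on the geometry keeps $\eta^{(\mathrm T)}\perp\eta^{(\mathrm R)}$ while the spatial expectation couples them: each interferer now contributes the product of its two per-transmission Laplace factors, which is exactly why the combined subscripts $(\tau,\eta)$ appear in both $Q_{\tau,\eta}$ and $G_{\tau,\eta}$ inside the single coupled $\xi_{\tau,\eta,t}$ integral of the $\mathrm A=\mathrm Q$ branch.

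The main obstacle is the interference Laplace transform itself. Because the users are of type I, the interferer process $\Phi_I$ is not Poisson, and each interferer's transmit power is correlated with its position through its own serving distance $l_j$ and through the cap $\widehat P$. I therefore expect the crux to be justifying the approximation that replaces $\Phi_I$ by a tractable process with an explicit conditional law of $(l_j,d_j)$ --- the step that produces the kernel $\varepsilon_b(u,v)=e^{-ub}(1-e^{-\frac{C_1}{C_2}vb})(1-e^{-vb})^{-1}$, the constraint $u\le v$ encoding $l_j\le d_j$, and the fitted constants $C_1=\tfrac{12}{5}$, $C_2=\tfrac54$. Establishing that this approximation is tight enough to warrant the ``closely approximated'' claim, and that the Laplace transform may be differentiated termwise to form the $\eta^{(\mathrm T)}$-density, is where the real effort lies; the remaining manipulations are bookkeeping over the two power-control regimes.
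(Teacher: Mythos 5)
Your proposal is correct and follows essentially the same route as the paper's own proof: the same decomposition of (\ref{def}) into the initial-transmission term plus the term $\mathbb{E}\big[\mathbb{P}[\eta^{(\mathrm{R})}>\tau-\eta^{(\mathrm{T})}]\mathbbm{1}_{\{\eta^{(\mathrm{T})}\leq\tau\}}\big]$, the same density-of-$\eta^{(\mathrm{T})}$ times CCDF-of-$\eta^{(\mathrm{R})}$ integration over $[0,\tau]$ with the Campbell--Mecke formula handling the marked interferer (which produces the $G_{a,b}$ terms), the same coupled-versus-factored spatial expectation distinguishing the QSI and FVI branches, and the same reliance on the inhomogeneous-PPP approximation of the type-I user process with its Rayleigh-type serving-distance law and the $l_j\leq d_j$ constraint. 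The paper likewise treats the PPP approximation and the $\mathbf{C}^{(\mathrm{T})}$ derivation by importing prior results (Haenggi's user point process model and Wang et al.'s power-control analysis) rather than re-justifying them, so your identification of that approximation as the load-bearing step is consistent with how the paper actually proceeds.
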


\begin{proof}
The proof is given in \textbf{Appendix A}.
%\vspace{-2mm}
\end{proof}

%\vspace{-10mm}
We notice from (\ref{eqn:M_II}) that in the case when $\widehat{P}= \bar{P} $, $\mathbf{C}^{\mathrm{A}}_{\mathrm{GFPC}}$  
is only affected by the product 
$\zeta_{B} \widehat{P}^{\frac{2}{\alpha \epsilon} } $ for given $\alpha$, $\epsilon$ and $\tau$. This indicates that to guarantee a certain target coverage probability, 
the maximum transmit power of users $\widehat{P}$ can be set inversely proportional to the BS density $\zeta_{B}$ ensuring a fixed value of $\zeta_{B} \widehat{P}^{\frac{2}{\alpha \epsilon} } $. 
 
%\vspace{-1mm}
Next, we investigate the uplink coverage probability in some special cases of the GFPC. We present the analytical results in the following corollaries, the detailed proofs of which are omitted due to the space limit. %\vspace{-1mm}
%C1
\begin{corollary}\label{C1} 
The uplink coverage probability with FPC  
can be closely approximated by (\ref{C1:FPC}), shown on the top of the next page, where 
%\begin{align}
$\widetilde{\mathbf{C}}^{(\mathrm{T})}_{\mathrm{FPC}} \!=\! \int^{ \infty }_{0} \!\!  \exp \! \Big (  \! -   t     - \mho_{0, t} \Big( Q_{\tau,0} \big(0,  \!  u^{\!\frac{\alpha \epsilon}{2} } v^{ -\! \frac{\alpha  }{2} }   \big) \! \Big) \! \Big ) \mathrm{d}t $. %\nonumber 
%\end{align}
Moreover, in the special case of FCIPC, $\widetilde{\mathbf{C}}^{\mathrm{A}}_{\mathrm{FPC}}$ can be further simplified as (\ref{FCIPC}), where $\!\widetilde{\mathbf{C}}^{(\mathrm{T})}_{ \mathrm{FCIPC}}\!\! =\! \exp \! \Big(  \! -  \mho_{0,1 }  \! \Big (\!  Q_{\tau,0} \big( 0,  
  \! ( \frac{u}{ v })^{\!\frac{\alpha }{2}}  \!  \big)   \!\Big) \!    \Big)$. 
 \begin{figure*} %[!ht]
 \normalsize  
 \vspace{-6mm}
 \begin{align}  %C_FPC 
 \label{C1:FPC}
 \hspace{-0mm} & \widetilde{\mathbf{C}} ^{\mathrm{A}}_{\mathrm{FPC}}    \! \! = \!\!  \nonumber \\
 \hspace{-5mm} & \begin{dcases} \!   & \hspace{-4mm} \widetilde{\mathbf{C}}^{(\mathrm{T})}_{\mathrm{FPC}} \! + \! \! \int^{\tau}_{0} \!\!\!\! \int^{\infty}_{0 } \!  \!      %\frac{  t^2 }{C_{2}}   
  \mho_{0,t} \Big( 
  G_{\tau,\eta} \Big(   \frac{u^{\!\frac{\alpha\epsilon }{2} } }{ v^{\frac{\alpha}{2} } }, 0 \!
  \Big)     \! \Big) \! \exp \! \bigg( \! \! -   \! t - %\Big[   1 \! + \! \frac{ t }{ C_{2}}  
   \mho_{0,t} \Big(  Q_{\tau,\eta} \Big(\! 0,  \frac{u^{\!\frac{\alpha\epsilon }{2} } }{ v^{\frac{\alpha}{2} } } \!
  \Big) \! \Big) \!\! \bigg) \mathrm{d} t  \mathrm{d} \eta, \hspace{55mm} \mathrm{A} \!=\! \mathrm{Q}\hspace{-2mm}
  \\
   & \! \hspace{-3mm} \widetilde{\mathbf{C}}^{(\mathrm{T})}_{\mathrm{FPC}} \! + \!\! \int^{\tau}_{0} \!\!\!\! \int^{\infty}_{0 } \!\!\!   \mho_{0,t}   \Big( 
   G_{\eta,\eta} \Big(   \frac{u^{\!\frac{\alpha\epsilon }{2} } }{ v^{\frac{\alpha}{2} } }, 0 \!
   \Big) \! \Big) \! \exp \! \bigg( \! \! -  \!  t \!  - \!    \mho_{0,t} \Big(  Q_{\eta,\eta} \Big(\! 0, \!  \frac{u^{\!\frac{\alpha\epsilon }{2} } }{ v^{\frac{\alpha}{2} } } \!
   \Big) \! \Big) \! \!  \bigg) \mathrm{d} t \! %\nonumber \\ & \hspace{50mm} \times  
   \! \int^{\infty}_{0 } \!\!\!\!  \exp \! \bigg( \!\! -  \! t \! - \!    \mho_{0,t}  \Big(\!  Q_{ \tau- \eta,0} \Big(0,  \! \frac{u^{\!\frac{\alpha\epsilon }{2} } }{ v^{\frac{\alpha}{2} } } \!
       \Big) \! \Big)  \!\bigg) \mathrm{d} t  \mathrm{d} \eta, \, \mathrm{A} \!= \!\mathrm{F} \hspace{-2mm}
  \end{dcases}   \vspace{-2mm}
 \end{align}
 \vspace{1pt} \hrulefill
 \end{figure*}
  \begin{figure*} %[!ht]
  \normalsize  
  \vspace{-5mm}
  \begin{align} 
  & \hspace{-10mm}\widetilde{\mathbf{C}}^{\mathrm{A}}_{\mathrm{FCIPC}}  \! =  \begin{dcases} &   \hspace{-3mm}  \! 
  \widetilde{\mathbf{C}}^{(\mathrm{T})}_{\mathrm{FCIPC}} \!   + \! \!  \int^{\tau}_{0} \!    \mho_{0,1}  \Big(  G_{\tau,\eta}  \Big(   \frac{u^{\!\frac{\alpha}{2}} }{ v^{\!\frac{\alpha}{2}} }\!, 0   \Big) \! \Big) \exp \!  \bigg(  \!\! - \!   \mho_{0,1}  \!   \Big(  Q_{\tau,\eta} \Big(0, \! \frac{u^{\!\frac{\alpha}{2}}}{v^{\!\frac{\alpha}{2}}} \! \Big) \! \Big)   \!    \bigg)  
   \mathrm{d} \eta,  \hspace{39 mm} \mathrm{A}\!=\!\mathrm{Q} \\    \!
  &  \hspace{-3mm}  \widetilde{\mathbf{C}}^{(\mathrm{T})}_{\mathrm{FCIPC}}  \! + \!\! \int^{\tau}_{0} \!   
  \! \mho_{0,1}   \Big(  G_{\eta,\eta}   \Big(   \frac{u^{\!\frac{\alpha}{2}} }{ v^{\!\frac{\alpha}{2}} }\!,   0   \Big) \! \Big) \exp \! \bigg( \! \! - \!   \mho_{0,1} \!   \Big(  Q_{\eta,\eta} \Big(0, \! \frac{u^{\!\frac{\alpha}{2}}}{v^{\!\frac{\alpha}{2}}} \! \Big) \! \Big) \!   
  -   \!  \mho_{0,1}  \! \Big(\! Q_{\tau-\eta,0} \Big(0, \! \frac{u^{\!\frac{\alpha}{2}}}{v^{\!\frac{\alpha}{2}}} \! \Big) \! \Big) \!   \bigg)   
  \mathrm{d} \eta
  , \hspace{5mm} \mathrm{A}\!=\!\mathrm{F}
  \end{dcases} \hspace{-2mm} \label{FCIPC}
  \end{align}
  \vspace{0pt} \hrulefill \vspace{-2mm}
  \end{figure*} 
\end{corollary}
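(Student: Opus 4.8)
The plan is to read off both displays as successive specialisations of the GFPC expression~(\ref{eqn:M_II}) of Theorem~\ref{Thm1}, so that no new stochastic-geometry computation is required; everything follows from pushing the power-control parameters to their limiting values and tracking how the building blocks $\omega$ and $\xi$ in~(\ref{omega})--(\ref{xi}) degenerate.

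First I would obtain the FPC formula~(\ref{C1:FPC}) by setting $\widehat{P}=\infty$. Because $B_{\widehat{P}}=C_{2}\pi\zeta_{B}(\widehat{P}/\varrho)^{2/(\alpha\epsilon)}\to\infty$, three things happen simultaneously in~(\ref{eqn:M_II}) and~(\ref{eqn:S}): the upper-regime integrals $\int_{B_{\widehat{P}}}^{\infty}(\cdots)\,\mathrm{d}t$ vanish, the lower-regime integrals $\int_{0}^{B_{\widehat{P}}}$ extend to $\int_{0}^{\infty}$, and—identifying the cap level $B_{P}$ appearing in~(\ref{omega})--(\ref{xi}) with $B_{\widehat{P}}$—the second summands $\mho_{B_{P}/c,\,c}(\cdots)$ inherit a lower integration limit $B_{\widehat{P}}/c\to\infty$ and therefore collapse to zero. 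Consequently $\omega_{a,b,c}(x,y)\to\mho_{0,c}(Q_{a,b}(0,x))$ and $\xi_{a,b,c}(x,y)\to\mho_{0,c}(G_{a,b}(x,0))\,\exp(-t-\mho_{0,c}(Q_{a,b}(0,x)))$, which erases all dependence on the second argument $y=B_{\bar{P}}^{\alpha\epsilon/2}/(t^{\alpha\epsilon/2}v^{\alpha/2})$ and reproduces~(\ref{C1:FPC}) term by term in both the QSI and FVI branches. The only care needed is a dominated-convergence argument to interchange the limit $\widehat{P}\to\infty$ with the $t$- and $z$-integrations, for which each integrand is bounded by $e^{-t}$ times a fixed integrable envelope.

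Next I would derive the FCIPC display~(\ref{FCIPC}) from~(\ref{C1:FPC}) by additionally setting $\epsilon=1$, whereupon every argument $u^{\alpha\epsilon/2}/v^{\alpha/2}$ reduces to the scale-invariant ratio $(u/v)^{\alpha/2}$. The decisive step is the rescaling $u\mapsto u/t,\ v\mapsto v/t$ inside $\mho_{0,t}$: it sends the measure $\tfrac{t^{2}}{C_{2}}\varepsilon_{t}(u,v)\,\mathrm{d}u\,\mathrm{d}v$ to $\tfrac{1}{C_{2}}\varepsilon_{1}(u,v)\,\mathrm{d}u\,\mathrm{d}v$ (the Jacobian $t^{-2}$ cancelling the $t^{2}$ prefactor) while leaving the scale-invariant argument untouched, so that $\mho_{0,t}((u/v)^{\alpha/2})=\mho_{0,1}((u/v)^{\alpha/2})$ becomes independent of $t$. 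The outer $t$-integral then factors out as $\int_{0}^{\infty}e^{-t}\,\mathrm{d}t=1$, turning every block of the form $\int_{0}^{\infty}\mho_{0,t}(G)\,e^{-t-\mho_{0,t}(Q)}\,\mathrm{d}t$ into $\mho_{0,1}(G)\,e^{-\mho_{0,1}(Q)}$ and every block $\int_{0}^{\infty}e^{-t-\mho_{0,t}(Q)}\,\mathrm{d}t$ into $e^{-\mho_{0,1}(Q)}$; applying this to $\widetilde{\mathbf{C}}^{(\mathrm{T})}_{\mathrm{FPC}}$ and to both branches of~(\ref{C1:FPC}) yields~(\ref{FCIPC}).

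The step I expect to be the main obstacle is this scale-invariance collapse. One must check that the substitution $u\mapsto u/t,\ v\mapsto v/t$ preserves the nested domain $0<u<v<\infty$ and that the Jacobian exactly annihilates the $t^{2}$ prefactor of $\mho_{0,t}$, so that the $t$-dependence disappears precisely when $\epsilon=1$; for $\epsilon<1$ the argument $u^{\alpha\epsilon/2}/v^{\alpha/2}$ fails to be scale-invariant, the $t$-integral no longer separates, and this is exactly why the general FPC expression~(\ref{C1:FPC}) must retain the outer $\int_{0}^{\infty}\,\mathrm{d}t$.
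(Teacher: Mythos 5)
Your proposal is correct and follows exactly the route the paper intends (its own proofs of the corollaries are omitted for space): Corollary~\ref{C1} is obtained by specializing Theorem~\ref{Thm1}, letting $B_{\widehat{P}} \to \infty$ so that the upper-regime integrals $\int_{B_{\widehat{P}}}^{\infty}(\cdots)\,\mathrm{d}t$ and the $\mho_{B_{\widehat{P}}/c,\,c}(\cdots)$ summands vanish, which collapses $\omega_{a,b,c}(x,y)$ to $\mho_{0,c}\big(Q_{a,b}(0,x)\big)$ and $\xi_{a,b,c}(x,y)$ to $\mho_{0,c}\big(G_{a,b}(x,0)\big)\exp\!\big(-t-\mho_{0,c}(Q_{a,b}(0,x))\big)$, exactly as you state, reproducing both branches of (\ref{C1:FPC}). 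Your scale-invariance argument for FCIPC---the substitution $u \mapsto u/t$, $v \mapsto v/t$, whose Jacobian $t^{-2}$ cancels the $t^{2}$ prefactor of $\mho_{0,t}$ while $\varepsilon_{t}(u/t,v/t)=\varepsilon_{1}(u,v)$, so that $\mho_{0,t}\big((u/v)^{\alpha/2}\text{-type arguments}\big)=\mho_{0,1}(\cdots)$ when $\epsilon=1$ and the outer $t$-integral factors out as $\int_{0}^{\infty}e^{-t}\,\mathrm{d}t=1$---is precisely the step that makes (\ref{FCIPC}) free of the $t$-integration, and you have carried it out correctly, including the observation of why this collapse fails for $\epsilon<1$.
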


%\vspace{-1mm}
With the analytical expressions of $\widetilde{\mathbf{C}}^{\mathrm{A}}_{\mathrm{GFPC}} $ and $ \widetilde{\mathbf{C}}^{\mathrm{A}}_{\mathrm{FPC}} $ in (\ref{eqn:M_II}) and (\ref{C1:FPC}), respectively, we have the following observation.

\noindent
{\bf Remark 1}: In an ultra-dense network, i.e. when $\zeta_{B} \to \infty$, $\widetilde{\mathbf{C}}^{\mathrm{A}}_{\mathrm{GFPC}}  \sim \widetilde{\mathbf{C}}^{\mathrm{A}}_{\mathrm{FPC}}   $, 
which can be verified by letting $B_{\widehat{P}} \to \infty $ as a direct result of $\zeta_{B} \to \infty$.

\begin{corollary} \label{C3}
The uplink coverage probability with NPC (i.e. when $\epsilon=0$) is closely approximated by (\ref{NPC}), where $\widetilde{\mathbf{C}}^{(\mathrm{T})}_{\mathrm{NPC}} \! = \! \int^{\infty }_{0} \!  \exp \! \big(   \! - \! t  \! -     \varphi_{t} \big( Q_{\tau,0} (0,v^{-\frac{\alpha}{2}}) \big)  \big) \mathrm{d} t$ and  $\varphi_{t}(z) \! := \! \frac{t}{C_{2}}　\int^{\infty}_{0} \! \!  z  \big ( 1 - e^{- \frac{C_{1}}{C_{2}}  v t  } \big) \mathrm{d} v $.

\begin{figure*}    \vspace{-3mm}
\begin{align} \label{NPC}
  \hspace{-1mm} & \widetilde{\mathbf{C}}_{\mathrm{NPC}}\!=\!\!  
  \begin{dcases} & \hspace{-4mm} \widetilde{\mathbf{C}}^{(\mathrm{T})}_{\mathrm{NPC}} \! + \!\!  \int^{\tau}_{\!0} \!\!\!\! \int^{\infty}_{\!0} \!\!\!\!   \varphi_{t} \big( G_{\!\tau,\eta}(v^{-\frac{\alpha}{2}},0 )  \big)   \!     
 \exp \! \Big(  \! \! - \! t    \! -   \varphi_{t} \big( Q_{\!\tau,\eta}(0,  v^{-\frac{\alpha}{2}} \! )  \big) \! \Big) \mathrm{d}  t  \mathrm{d} \eta , \hspace{53mm} \mathrm{A} \!=\! \mathrm{Q}   \\
 & \hspace{-4mm} \widetilde{\mathbf{C}}^{(\mathrm{T})}_{\mathrm{NPC}} \! + \! \!\! \int^{\tau}_{0} \!\!\!\! \int^{\infty}_{0} \!\!\!\!  \varphi_{t} \big(G_{\!\eta,\eta}(v^{-\frac{\alpha}{2}}\!,0)  \big)  \!     
  \exp \! \Big(  \! \! - \! t   \! - \!  \varphi_{t} \big(Q_{ \eta,\eta }(0,v^{-\frac{\alpha}{2}}\!)  \big)   \! \Big)  \mathrm{d}  t \! \! \int^{\infty}_{0} \!\!\!\! \exp \! \Big(\! \! - \! t - \!   \varphi_{t} \big(Q_{ \tau-\eta,0}(0,v^{-\frac{\alpha}{2}})  \big)  
\!   \Big) \mathrm{d}  t  \mathrm{d} \eta , \, \mathrm{A}\!=\! \mathrm{F}
\end{dcases} \hspace{-4mm}
\end{align} 
\hrulefill \vspace{-5mm}
\end{figure*}

\end{corollary}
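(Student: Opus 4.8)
The plan is to derive Corollary~\ref{C3} as the $\epsilon=0$ specialization of the fractional-power-control result of Corollary~\ref{C1}, rather than working directly from Theorem~\ref{Thm1}. The key observation is that under NPC the transmit power $P_i=\varrho\, l_i^{0}=\varrho$ is constant, so (provided $\varrho\le\widehat{P}$) the truncation branch in (\ref{PC}) is never active and the enforced power $\bar{P}$ is irrelevant. Hence NPC sits inside the untruncated FPC framework: equivalently, since $B_a=C_2\pi\zeta_B(a/\varrho)^{2/(\alpha\epsilon)}$ with $\widehat{P}>\varrho$, letting $\epsilon\to0$ sends $B_{\widehat{P}}\to\infty$, which by the same mechanism as Remark~1 collapses (\ref{eqn:M_II}) onto (\ref{C1:FPC}); the $\int_{B_{\widehat{P}}}^{\infty}$ segments and the second (boundary) $\mho$ summand in (\ref{omega})--(\ref{xi}) disappear, leaving only the $\mho_{0,t}(\cdot)$ pieces that already constitute the FPC form. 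I would therefore begin from (\ref{C1:FPC}) together with the initial-transmission term $\widetilde{\mathbf{C}}^{(\mathrm{T})}_{\mathrm{FPC}}$ stated in Corollary~\ref{C1}.

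The second step is purely algebraic. Setting $\epsilon=0$ gives $u^{\alpha\epsilon/2}=1$, so every argument of the form $u^{\alpha\epsilon/2}v^{-\alpha/2}$ fed into $Q_{\cdot,\cdot}$ and $G_{\cdot,\cdot}$ in (\ref{C1:FPC}) reduces to $v^{-\alpha/2}$. The decisive consequence is that the arguments passed to $\mho_{0,t}$ then no longer depend on the inner integration variable $u$ of $\mho$; they are functions of $v$ alone. This $u$-decoupling is exactly what lets the double integral defining $\mho$ collapse onto the single integral defining $\varphi_t$.

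Concretely, for an integrand $z(v)$ that is independent of $u$ I would evaluate $\mho_{0,t}(z)=\frac{t^2}{C_2}\int_0^{\infty}\!\int_0^{v} z(v)\,\varepsilon_t(u,v)\,\mathrm{d}u\,\mathrm{d}v$ by performing the inner integral first. Using $\varepsilon_t(u,v)=e^{-ut}(1-e^{-\frac{C_1}{C_2}vt})(1-e^{-vt})^{-1}$ and $\int_0^{v}e^{-ut}\,\mathrm{d}u=(1-e^{-vt})/t$, the factor $(1-e^{-vt})^{-1}$ cancels and one power of $t$ is absorbed, leaving precisely $\frac{t}{C_2}\int_0^{\infty}z(v)(1-e^{-\frac{C_1}{C_2}vt})\,\mathrm{d}v=\varphi_t(z)$. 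Applying this identity with $z=Q_{\tau,0}(0,v^{-\alpha/2})$ produces $\widetilde{\mathbf{C}}^{(\mathrm{T})}_{\mathrm{NPC}}$, and with $z=G_{\tau,\eta}(v^{-\alpha/2},0)$, $z=Q_{\tau,\eta}(0,v^{-\alpha/2})$, and $z=Q_{\tau-\eta,0}(0,v^{-\alpha/2})$ reproduces the QSI and FVI integrands of (\ref{NPC}) after reassembling the $\mathrm{d}t\,\mathrm{d}\eta$ integrals.

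I expect the main subtlety to lie in this collapse $\mho_{0,t}\mapsto\varphi_t$ and in confirming its prerequisite, namely that every $\mho$ argument genuinely loses its $u$-dependence at $\epsilon=0$ so that the exact cancellation of $(1-e^{-vt})$ is available uniformly across all four integrands. The reduction to the untruncated form is comparatively routine: because (\ref{C1:FPC}) contains no $B_a$ factor, $\epsilon=0$ may be substituted directly, and the only point requiring care is the modeling argument that constant transmit power precludes truncation; should one instead take the limit inside (\ref{eqn:M_II}), a dominated-convergence argument using the monotonicity of $B_{\widehat{P}}$ in $\epsilon$ justifies interchanging $\epsilon\to0$ with the $t$- and $z$-integrations. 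Once these points are settled, collecting the simplified terms yields (\ref{NPC}).
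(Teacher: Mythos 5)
Your proposal is correct and follows essentially the route the paper intends: the corollaries' proofs are omitted there precisely because they are specializations of Theorem~\ref{Thm1} (equivalently of Corollary~\ref{C1}, since constant transmit power under NPC makes the truncation inactive and $B_{\widehat{P}}\to\infty$ as $\epsilon\to 0$), and your key computation---that for any argument $z$ depending only on $v$, the inner integral $\int_0^v e^{-ut}\,\mathrm{d}u=(1-e^{-vt})/t$ cancels the factor $(1-e^{-vt})^{-1}$ in $\varepsilon_t$ so that $\mho_{0,t}(z)=\varphi_t(z)$---is exactly the identity needed to pass from (\ref{C1:FPC}) to (\ref{NPC}). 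The substitution $u^{\alpha\epsilon/2}=1$ at $\epsilon=0$ and the resulting matching of all four integrands (initial transmission, QSI, and the two FVI factors) check out.
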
 
With the analytical expression of  $\widetilde{\mathbf{C}}^{\mathrm{A}}_{\mathrm{NPC}}$ in  (\ref{NPC}), we have another scaling property of   $\widetilde{\mathbf{C}}^{\mathrm{A}}_{\mathrm{GFPC}}$ in (\ref{eqn:M_II}) as follows.

\noindent
{\bf Remark 2}: Given the SIR threshold $\tau$ and $\epsilon$, when $\zeta_{B} \to 0$,  $\widetilde{\mathbf{C}}_{\mathrm{GFPC}}  \sim \widetilde{\mathbf{C}}_{\mathrm{NPC}}  $, which can be verified by letting $B_{\widehat{P}} \to 0 $ as a direct result of $\zeta_{B} \to 0$.

%C3
\begin{corollary} \label{C2}  The uplink coverage probability with TFPC (i.e. when $\bar{P}=0$) is approximated by  \begin{align}
&\hspace{-3mm} \widetilde{\mathbf{C}}^{\mathrm{A}}_{\mathrm{TFPC}}   \! = \!\! \begin{dcases}  & \hspace{-3mm} \widetilde{\mathbf{C}}^{(\mathrm{T})}_{\mathrm{TFPC}}+\int^{\tau}_{0} \!\!\! \int^{B_{\!\widehat{P}} }_{0} \!  \! \xi_{\tau,\eta,t} \Big(   \frac{u^{ \! \frac{\alpha \epsilon }{2}    } }{  v^{ \! \frac{\alpha}{2}  } },0  \Big)   \mathrm{d}  t \mathrm{d}  \eta , \quad \mathrm{A} = \mathrm{Q}  \\
& \hspace{-3mm} \widetilde{\mathbf{C}}^{(\mathrm{T})}_{\mathrm{TFPC}}+ \int^{\tau}_{0} \!\!\! \int^{B_{\!\widehat{P}} }_{0} \!\!\! \exp \! \Big(\!- t -  \omega_{\tau-\eta,0,t} \Big(  \frac{u^{ \! \frac{\alpha \epsilon }{2}    } }{  v^{ \! \frac{\alpha}{2}  } }, 0  \Big)
    \!  \Big)     \mathrm{d}  t \nonumber \\ 
& \hspace{10mm} \times \int^{B_{\!\widehat{P}} }_{0} \!  \!   \xi_{\eta,\eta,t} \Big(  \frac{u^{ \! \frac{\alpha \epsilon }{2}    } }{  v^{ \! \frac{\alpha}{2}  } }, 0  \Big)    \mathrm{d}  t    \mathrm{d}  \eta,  \! \quad \quad \mathrm{A} = \mathrm{F}  \hspace{-10mm}
\end{dcases}
\end{align}
where $\omega$ and $\xi$ are defined in (\ref{omega}) and (\ref{xi}), respectively, and $\widetilde{\mathbf{C}}^{(\mathrm{T})}_{\mathrm{TFPC}} = \int^{\!B_{\!\widehat{P}} }_{0} \!\!   \exp \! \Big (\! - \! t \! -   \!   
\omega_{\tau,0,t} \big( u^{ \frac{\alpha\epsilon}{2} }   v^{-\frac{\alpha}{2}} , 0 \big )
\!   \Big ) \mathrm{d}t$.
%\begin{figure*} 

%\end{figure*}
  
\end{corollary}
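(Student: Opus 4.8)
The plan is to obtain the TFPC result as the specialization $\bar{P}=0$ of the general GFPC expression in Theorem~\ref{Thm1}, instead of redoing the stochastic-geometry derivation. Setting $\bar{P}=0$ yields $B_{\bar{P}}=C_{2}\pi\zeta_{B}(\bar{P}/\varrho)^{2/(\alpha\epsilon)}=0$, and I would trace how this single substitution propagates into the arguments of $\omega$ and $\xi$ in (\ref{eqn:M_II}) and (\ref{eqn:S}). These arguments split into two families: those carrying $B_{\bar{P}}^{\alpha\epsilon/2}$ in the numerator collapse to $0$, whereas those carrying $B_{\bar{P}}^{\alpha\epsilon/2}$ in the denominator diverge to $+\infty$. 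This mirrors the physics of TFPC: users whose required power would exceed $\widehat{P}$ are silenced, so the region $t>B_{\widehat{P}}$ (link distance beyond the truncation radius) ought to drop out entirely.

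Concretely, I would first treat the surviving pieces, namely the integrals over $t\in[0,B_{\widehat{P}}]$. There the second argument of each $\xi$ and $\omega$ equals $B_{\bar{P}}^{\alpha\epsilon/2}/(t^{\alpha\epsilon/2}v^{\alpha/2})\to 0$, which turns $\xi_{\tau,\eta,t}(u^{\alpha\epsilon/2}/v^{\alpha/2},\cdot)$ and $\omega_{\tau,0,t}(u^{\alpha\epsilon/2}/v^{\alpha/2},\cdot)$ directly into the stated TFPC integrands with second slot $0$. Applying the identical reduction to (\ref{eqn:S}) collapses $\widetilde{\mathbf{C}}^{(\mathrm{T})}_{\mathrm{GFPC}}$ into the claimed $\widetilde{\mathbf{C}}^{(\mathrm{T})}_{\mathrm{TFPC}}$, and for the FVI branch the two factors built from $\omega_{\tau-\eta,0,t}$ and $\xi_{\eta,\eta,t}$ reduce in the same way, producing the product form in the statement.

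The crux is showing that every integral over $t\in[B_{\widehat{P}},\infty)$ vanishes, and verifying this cleanly is the step I expect to be the main obstacle, since on this range one argument of $\omega$ (including the $\omega$ hidden inside $\xi$) becomes $+\infty$ and one must avoid a spurious $0\cdot\infty$. The mechanism is that $\omega_{a,b,t}(\infty,\cdot)$ contains the term $\mho_{0,t}(Q_{a,b}(0,\infty))=\mho_{0,t}(1)$; since $\int_{0}^{v}\varepsilon_{t}(u,v)\,\mathrm{d}u=(1-e^{-\frac{C_{1}}{C_{2}}vt})/t\to 1/t$ as $v\to\infty$, the outer $v$-integral of this constant argument diverges, so $\mho_{0,t}(1)=\infty$ and $\exp(-t-\omega_{a,b,t}(\infty,\cdot))=0$ for every fixed $t\ge B_{\widehat{P}}$; the same divergence appears in the FVI factor through $Q_{\eta,\eta}(v^{-\alpha/2},\infty)\to 1$. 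It then remains to confirm that the $\xi$ prefactor stays finite: its $G$-terms evaluated at the $+\infty$ slot tend to $0$ (e.g. $xf_{a-b}(x)f_{b}(x)^{2}\to 0$), while the remaining $G$-terms decay like $v^{-\alpha/2}$ and are integrable exactly because $\alpha>2$. A finite prefactor times a vanishing exponential is genuinely $0$, so the $[B_{\widehat{P}},\infty)$ contributions disappear, and collecting the survivors gives the TFPC expressions for both $\mathrm{A}=\mathrm{Q}$ and $\mathrm{A}=\mathrm{F}$.
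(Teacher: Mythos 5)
Your route is the intended one (the paper omits the corollary proofs precisely because they are meant to follow by specializing Theorem~\ref{Thm1} at $\bar{P}=0$), and your reduction of the $t\in[0,B_{\widehat{P}}]$ integrals and of $\widetilde{\mathbf{C}}^{(\mathrm{T})}_{\mathrm{GFPC}}$ (second slot $\to 0$, since $f_{a}(0)=1$) is correct. The gap sits exactly at the step you identify as the crux. You claim $\omega_{a,b,t}(\infty,y)=+\infty$ because it ``contains'' the divergent term $\mho_{0,t}\big(Q_{a,b}(0,\infty)\big)=\mho_{0,t}(1)$. But $\omega$ has a second term, $\mho_{B_{\widehat{P}}/t,\,t}\big(Q_{a,b}(x,y)\big)$, and as $x\to\infty$ this term diverges to $-\infty$: $Q_{a,b}(\infty,y)=-f_{a-b}(y)f_{b}(y)$, which tends to $-1$ along $v\to\infty$ (because $y=v^{-\alpha/2}\to 0$), while $\int_{B_{\widehat{P}}/t}^{v}\varepsilon_{t}(u,v)\,\mathrm{d}u\to e^{-B_{\widehat{P}}}/t>0$, so its outer $v$-integral diverges linearly and negatively. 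Evaluated term-by-term, as you do, $\omega_{a,b,t}(\infty,y)$ is therefore an $\infty-\infty$ expression, and nothing follows about $\exp(-t-\omega)$. The $0\cdot\infty$ danger you guard against (finite $\xi$-prefactor times vanishing exponential) is not the one that bites; the cancellation inside $\omega$ is. The same unresolved expression also sits in the last FVI factor $\int_{B_{\widehat{P}}}^{\infty}\exp\big(-t-\omega_{\eta,\eta,t}(\infty,v^{-\alpha/2})\big)\mathrm{d}t$, so you cannot even assert that factor is bounded, and the FVI product could a priori be $0\cdot\infty$.

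The conclusion is true, but you must merge the two $\mho$ terms before sending $B_{\bar{P}}\to 0$. Using the telescoping identity $Q_{a,b}(0,x)+Q_{a,b}(x,y)=Q_{a,b}(0,y)$ and splitting the domain of $\mho_{0,t}$ into $\{u<B_{\widehat{P}}/t\}$ and $\{B_{\widehat{P}}/t\le u\le v\}$ gives $\omega_{a,b,t}(x,y)=\tfrac{t^{2}}{C_{2}}\big[\iint_{\{u<B_{\widehat{P}}/t,\,u\le v\}}Q_{a,b}(0,x)\,\varepsilon_{t}\,\mathrm{d}u\,\mathrm{d}v+\iint_{\{B_{\widehat{P}}/t\le u\le v\}}Q_{a,b}(0,y)\,\varepsilon_{t}\,\mathrm{d}u\,\mathrm{d}v\big]$. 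Now, as $x\to\infty$, the first piece is a divergent integral of the positive function $\varepsilon_{t}$ (so $+\infty$), while the second stays finite because its integrand decays like $v^{-\alpha/2}$ with $\alpha>2$; hence $\omega\to+\infty$ legitimately, every exponential on $[B_{\widehat{P}},\infty)$ vanishes, and your finite-prefactor check then closes both branches. Alternatively, and more cleanly, impose $\bar{P}=0$ inside the proof of Theorem~\ref{Thm1} itself: a truncated typical user transmits zero power, so the coverage integrand is identically zero for $t>B_{\widehat{P}}$, and truncated interferers contribute a PGFL factor of $1$ (equivalently, second slot $0$); then every quantity is finite throughout and no $\infty-\infty$ ever appears.
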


\begin{figure*}  
 \centering
  \vspace{-2mm}  
  \subfigure [$\epsilon$ = 0 (NPC, {\bf Corollary}  \ref{C3}) \vspace{-0mm}]%Comparison of coverage probabilities. ($\xi=0.2$)  
   {
 \label{fig:I_1}
  \centering   
  \includegraphics[width=0.32 \textwidth]{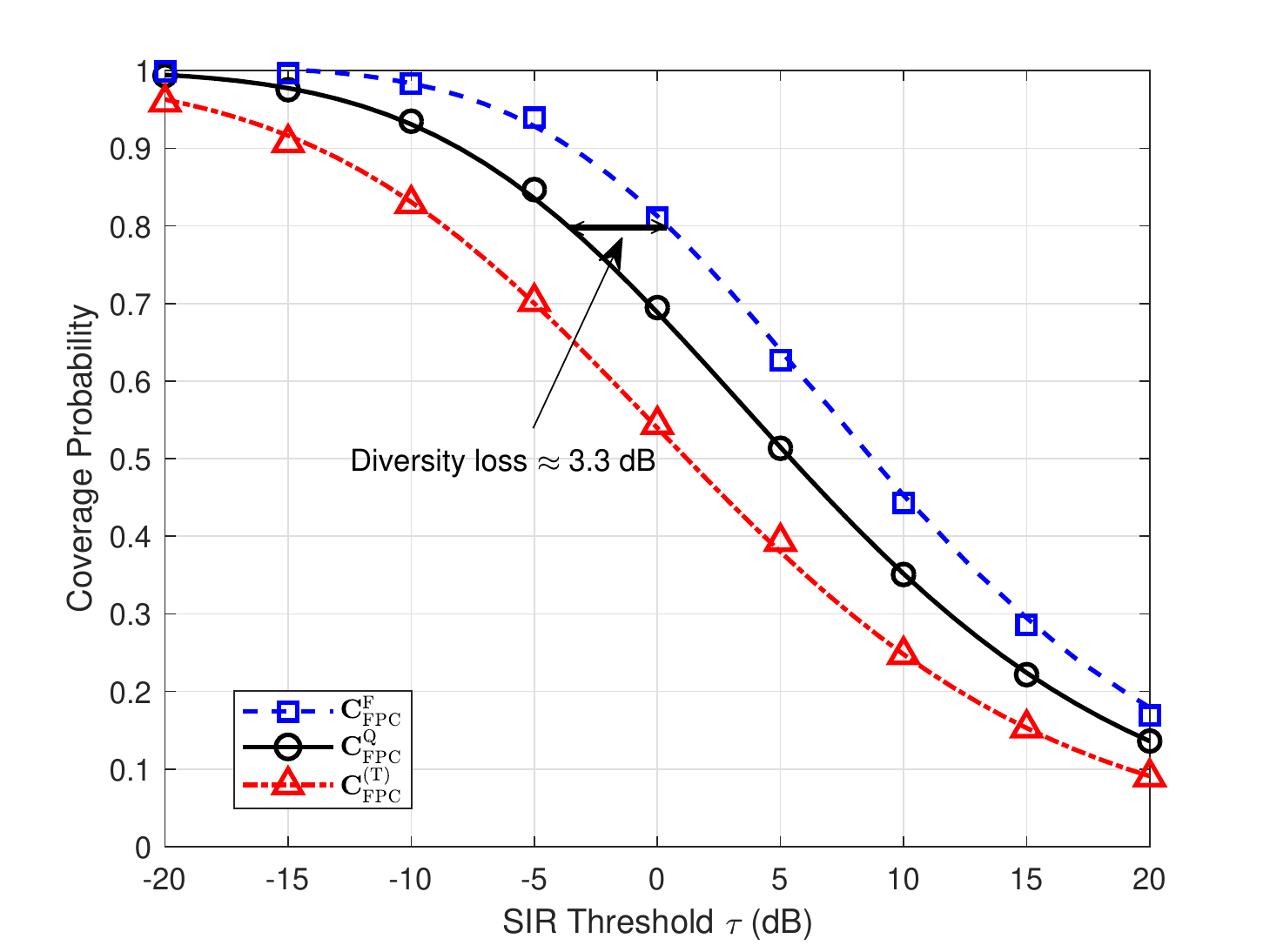}}
  \centering  
  \subfigure  [ %Comparison of coverage probabilities. ($\xi=0.8$) 
$\epsilon$ = 0.5 (FPC,  {\bf Corollary}  \ref{C1}) ] {
 \label{fig:I_2}
  \centering
 \includegraphics[width=0.32 \textwidth]{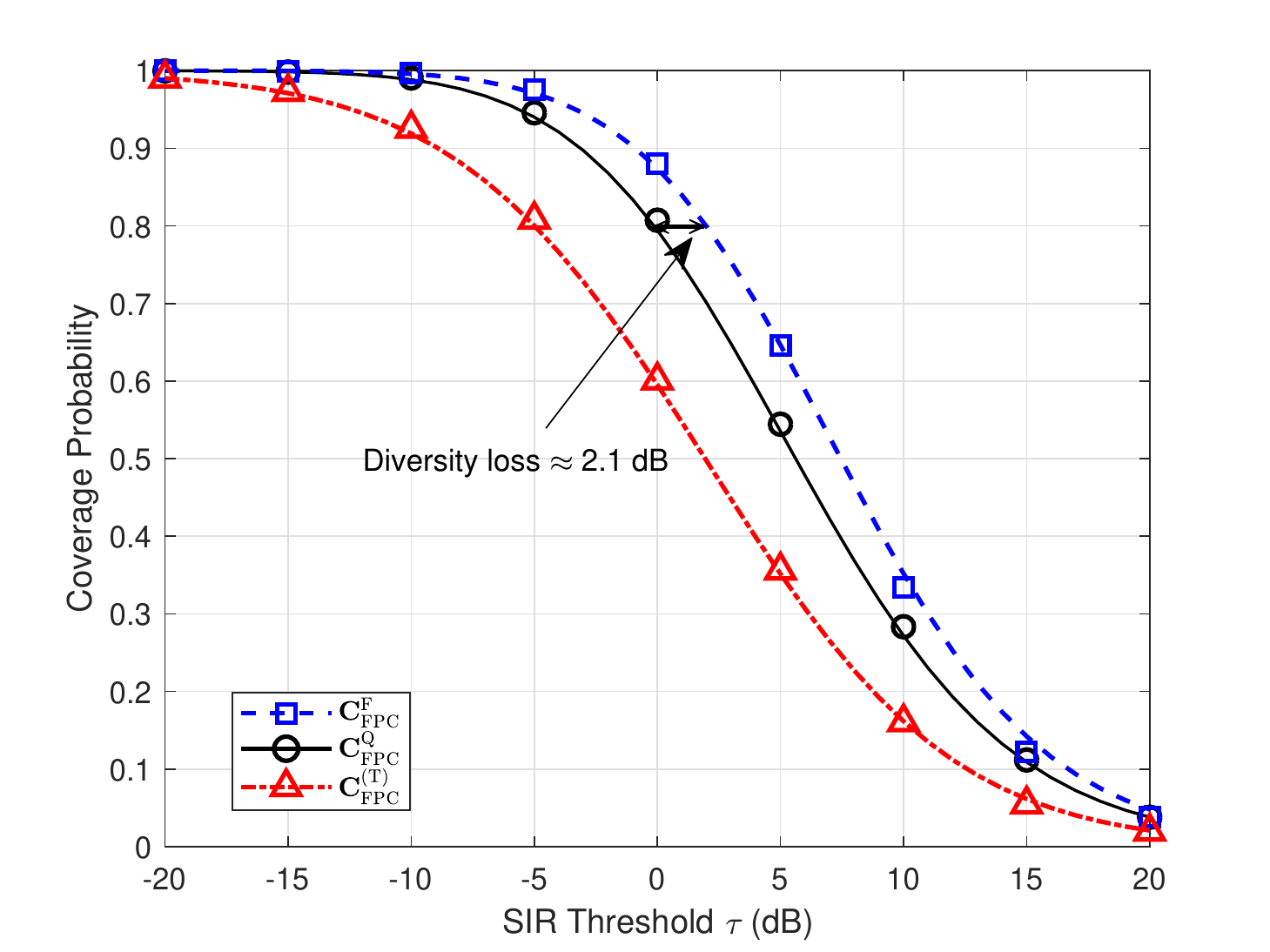}}
  \vspace{-0mm}
  \centering
    %\vspace{-5mm}  
    \subfigure [$\epsilon$ = 1 (FCIPC,  {\bf Corollary}  \ref{C1}) \vspace{-0mm}]%Comparison of coverage probabilities. ($\xi=0.2$)  
     {
   \label{fig:I_3}
    \centering   
    \includegraphics[width=0.32 \textwidth]{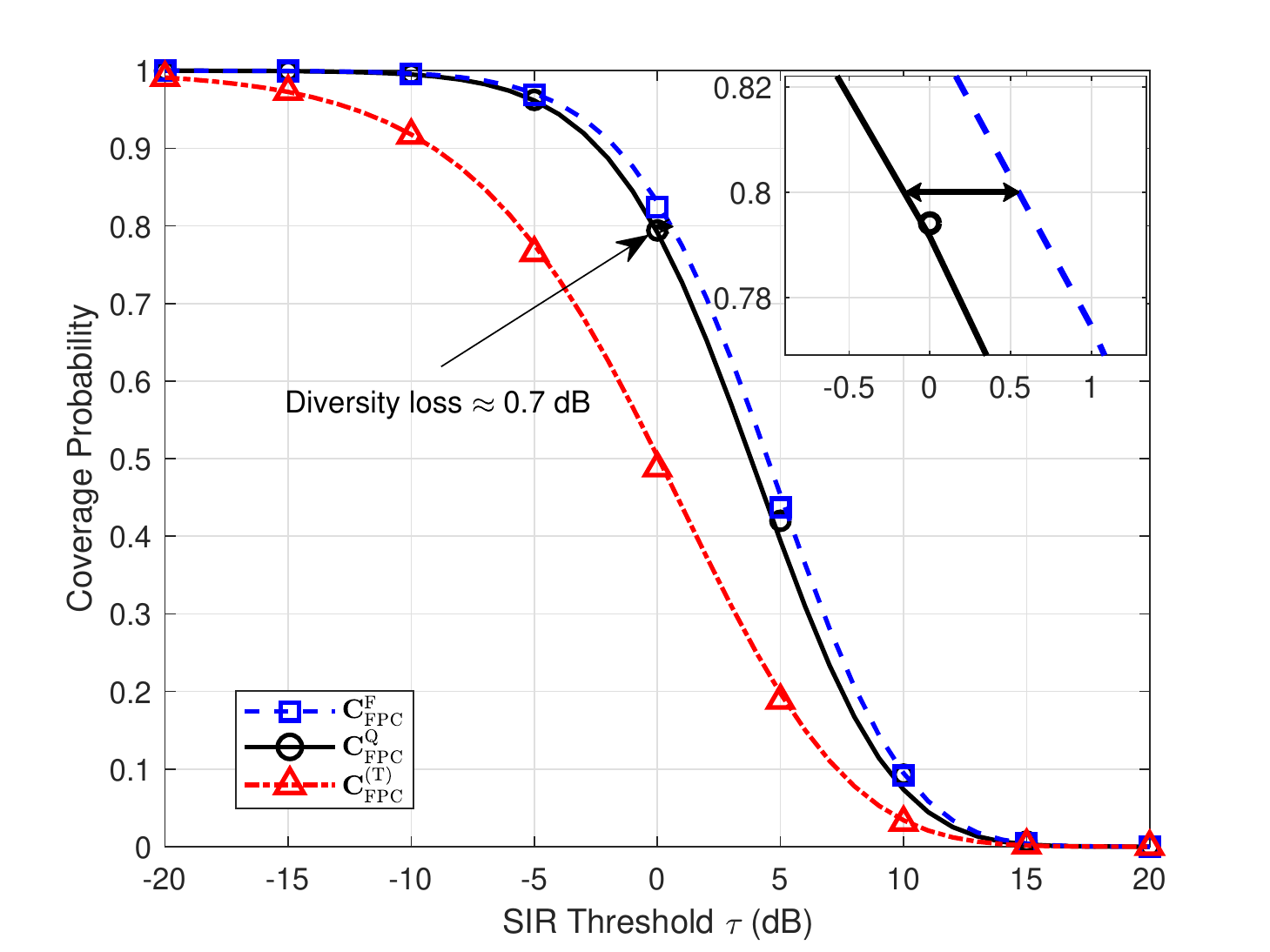}} \vspace{-2mm}
 \caption{The coverage probability with different values of $\epsilon$. %($ \widehat{P}  =\infty$) 
 } %(The markers are simulation results) } 
 \centering
 \label{fig:tauAB}
 \vspace{-0mm}
 \end{figure*}

 \begin{figure*}  \vspace{-4mm}
 \centering
   \begin{minipage}[c]{0.325 \textwidth}
    \includegraphics[width=0.95\textwidth]{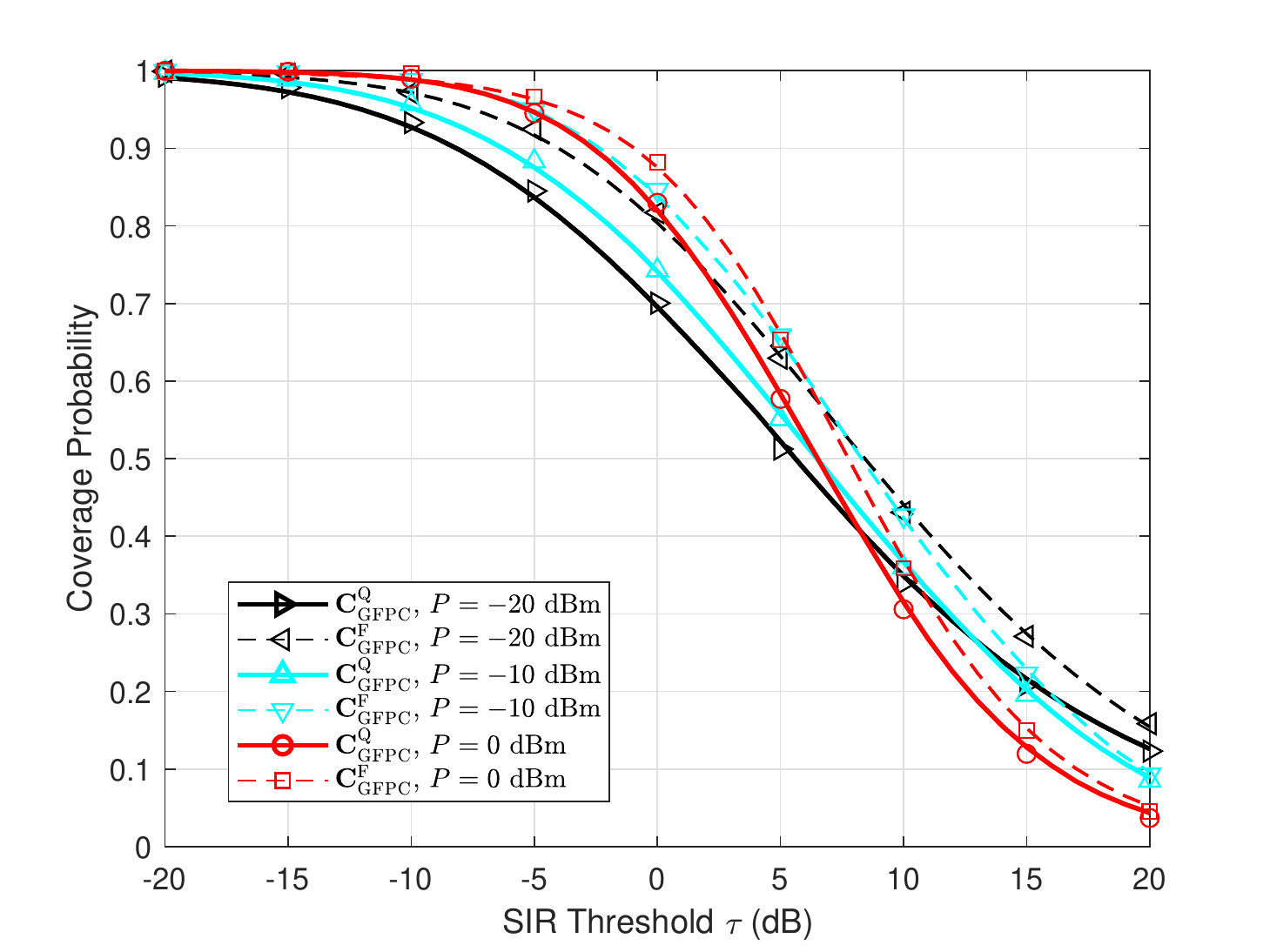} \vspace{-3mm}
    \caption{The coverage probability with different values of $\widehat{P}$  (for $\bar{P}=\widehat{P}$).
    } \label{fig:PM}
   \end{minipage}
   \begin{minipage}[c]{0.33\textwidth}
   \includegraphics[width=0.95\textwidth]{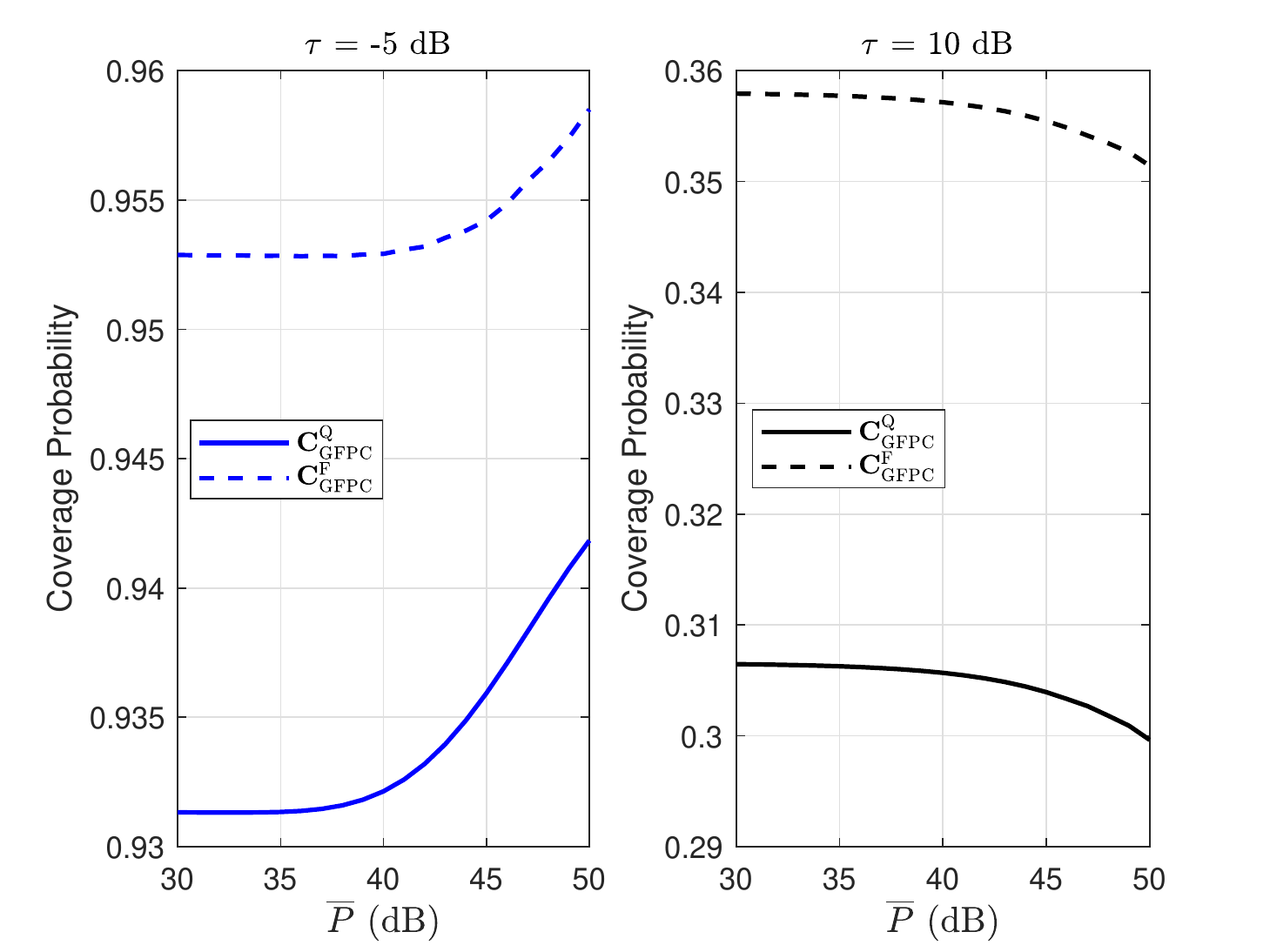} \vspace{-3mm}
   \caption{The coverage probability with different values of $\bar{P}$ (for $\epsilon$ = 0.5). } \label{fig:Pt}
   \end{minipage}
   \begin{minipage}[c]{0.325 \textwidth}\vspace{0mm}
     \includegraphics[width=0.95\textwidth]{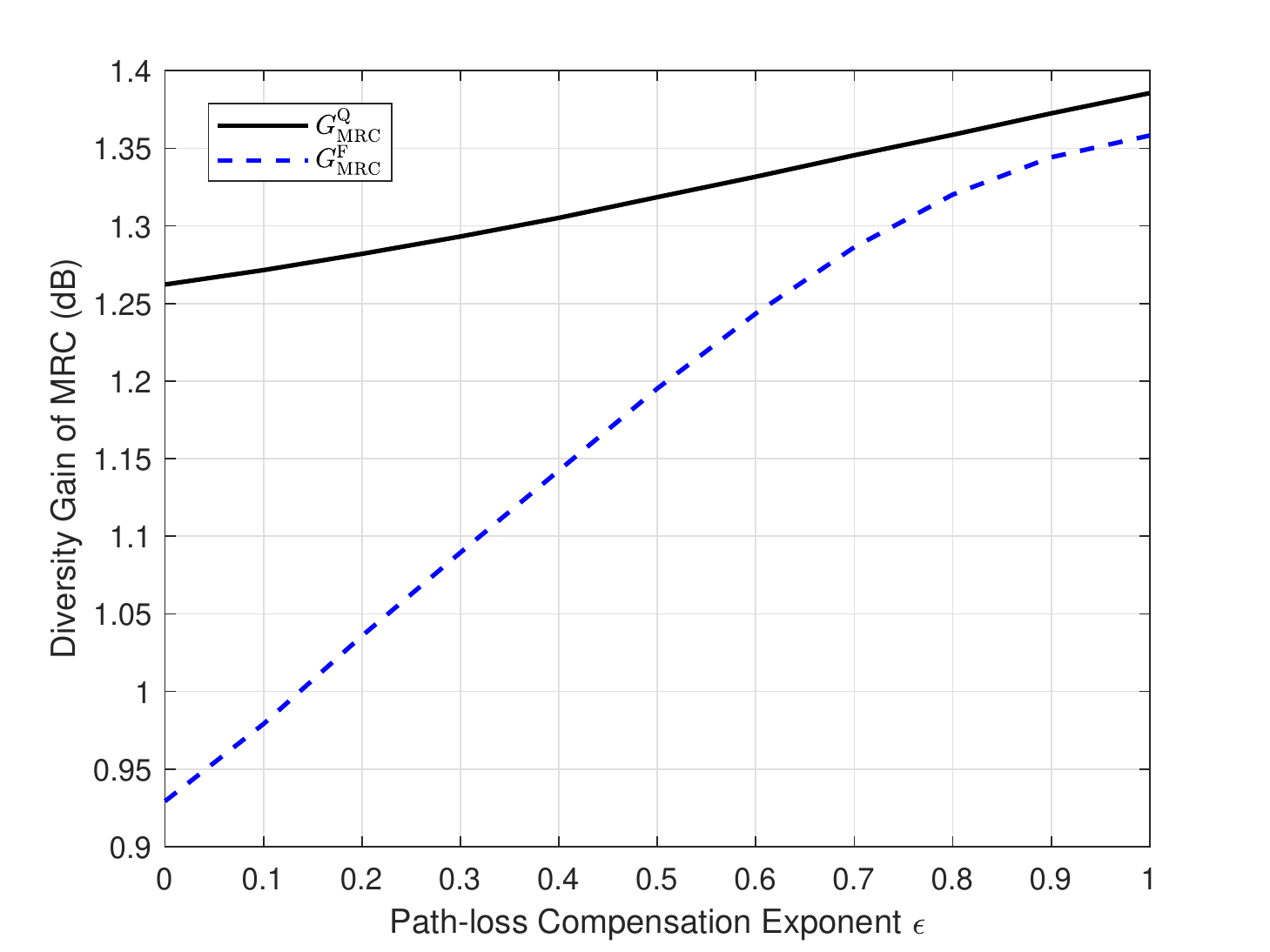} \vspace{-3mm}
     \caption{Diversity gain of MRC with different values of $\epsilon$ (for $\tau=0$ dB, $\widehat{P}=\infty$).
     } \label{fig:diversitygain}
     \end{minipage}  \vspace{-4mm}
 % \vspace{-10mm} 
  \end{figure*} 
  
%\vspace{-1mm}
\section{Numerical Results}
%\vspace{-1mm}
This section validates our analytical expressions through Monte Carlo simulations and evaluates the impact of different power control parameters. %We adopt the following parameter setting unless otherwise stated. 
In the simulations, we set the BS density $\!\zeta_{B}\!$ as $\!10$ BSs/km$^2$, path-loss exponent $\alpha$ as $4$, and baseline transmit power $\varrho$ as $\!-50\!$ dBm, unless otherwise stated. The curves and the markers are used to represent the analytical results and the simulation results, respectively.

\subsubsection{Impact of $\epsilon$}
Figure~\ref{fig:tauAB} compares the coverage probabilities achieved by the retransmission scheme with FVI and QSI and the initial transmission  
under FPC.  The performance degradation of the retransmission scheme with QSI over that with FVI  can be observed under different settings of PCE. To quantify such performance degradation, %due to interference correlation, 
we evaluate the {\em retransmission diversity loss}, defined as the ratio of the required SIR to achieve a target coverage probability with FVI to that with QSI, i.e., $ \frac{f^{-1}(\mathbf{C}^{\mathrm{Q}}_{\mathrm{FPC}}(\tau)) }{f^{-1} (\mathbf{C}^{\mathrm{F}}_{\mathrm{FPC}}(\tau) )}$, where $f^{-1}(\mathbf{C}^{\mathrm{A}}_{\mathrm{FPC}}(\tau)) $ represents the inverse function of  $\mathbf{C}^{\mathrm{A}}_{\mathrm{FPC}}(\tau) $.
When the target coverage probability is $80\%$, the diversity loss is around $3.3$ dB, $2.1$ dB and $0.7$ dB, when PCE $\epsilon$ equals $0$, $0.5$ and $1$, respectively. This reveals that the diversity loss can be effectively mitigated by increasing the path-loss compensation. Another observation is that 
higher path-loss compensation increases the coverage probability with a low SIR threshold, however, decreases the coverage probability with a large SIR threshold.
Thus, it is more beneficial to adopt a smaller power control exponent when the target SIR is large.  

%The reason is that large transmit power makes received signal %more dominant than
%overwhelm the interference, which degrades the impact of the interference correlation.

\subsubsection{Impact of $\widehat{P}$}  Fig.~\ref{fig:PM} illustrates the coverage probabilities when the maximum transmit power $\widehat{P}\!= \!-20 $ dBm, $-10$ dBm, $0$ dBm, (i.e. $\widehat{P}/\varrho = 30$ dB, $40$ dB, $50$ dB). It can be observed that the coverage probability benefits from larger $\widehat{P}$ at high-coverage regime but smaller  $\widehat{P}$ at the low-coverage regime.
When the target coverage probability is $80\%$, the diversity loss $L_{\Xi}$ is $4.3$ dB, $3.7$ dB and $1.7$ dB, respectively. This implies that setting a larger maximum transmit power reduces the diversity loss, which agrees with the findings in Fig.~\ref{fig:tauAB}. 

\subsubsection{Impact of $\bar{P}$}
In Fig.~\ref{fig:Pt}, we examine how the enforced transmit power $\bar{P}$ affects the coverage probabilities when the SIR threshold is relatively small (i.e. $\tau=-5$ dB) and large (i.e. $\tau=10$ dB).   
We find that larger $\bar{P}$ increases the coverage probabilities when $\tau$ is small but decreases the coverage probabilities when $\tau$ is large. 
This can be understood from the fact that the SIR at the receiver is more dominated by the received signal power and the interference power when $\tau$ is small and large, respectively. A larger $\bar{P}$ brings higher received signal power while smaller $\bar{P}$ causes less interference.

\subsubsection{Effect of MRC} Finally, we study the effect of MRC under FPC by comparing the coverage probability under retransmissions with and without MRC. The benefit of MRC is quantified in terms of {\em diversity gain} defined as  $G^{\mathrm{A}}_{\mathrm{MRC}} \!\! : = \!\! \frac{f^{-1}(\mathbf{C}^{\mathrm{A}} ) }{f^{-1}(\mathbf{T}^{\mathrm{A}} )}$, where $\mathbf{T}^{\mathrm{A}}$ denotes the coverage probability under retransmission without MRC (i.e. Type-I HARQ). 
Fig.~\ref{fig:diversitygain}
demonstrates the diversity gain of MRC under FPC with different values of PCE\footnote{The coverage probability of retransmission without MRC under FPC can be generated from (\ref{C1:FPC}) with $G_{\tau,\eta}$ and $Q_{\tau, \eta}$ replaced by $G_{\tau+\eta,\eta}$ and $Q_{\tau+\eta,\eta}$, respectively, when $\mathrm{A}=\mathrm{Q}$ and $Q_{\tau-\eta,0}$ replaced by $Q_{\tau,0}$ when $\mathrm{A}=\mathrm{F}$. }.                    
The results manifest that larger diversity gain of MRC can be achieved by increasing the transmit power. Moreover, 
the diversity gain of MRC is more pronounced in the scenario with QSI than that with FVI, especially when the transmit power is low.

 \vspace{-2mm}    
\section{Conclusion}
%\vspace{-1mm}
This letter presents a stochastic geometry analysis of a Type-II HARQ-CC retransmission scheme in uplink cellular networks with transmit power control. 
In particular, the uplink coverage probability in a large-scale cellular network  is characterized in the scenarios of both QSI and FVI. Our study reveals the effects of different power control parameters on the retransmission diversity loss and the diversity gain achieved by MRC. 
In addition, the derived analytical expressions can be utilized to maximize the uplink coverage probability by optimizing the power control parameters.

\begin{comment}
\begin{figure*} 
 \centering
  %\vspace{-5mm}  
  \subfigure [ESAP\vspace{-5mm}]%Comparison of coverage probabilities. ($\xi=0.2$)  
   {
 \label{fig:FM_Pm}
  \centering   
  \includegraphics[width=0.45 \textwidth]{I_FM_Pm.pdf}}
  \centering  
  \subfigure  [ Type I HARQ ] { \centering
  \includegraphics[width=0.40 \textwidth]{Diversitygain.pdf}}
   \vspace{-3mm}
 \label{fig:meta_Pm}
 \caption{The first moment and distribution of coverage probability with different $P$.  } 
 \centering
 \label{fig:Pm}
 \vspace{-3mm}
 \end{figure*}
\end{comment}

%\vspace{-2mm}
\section*{Appendix}
%\vspace{-2mm}
\section*{Proof of Theorem 1} \vspace{-1mm}
\begin{proof}   

The distribution of uplink users is not a PPP. For tractability, we adopt the method introduced in \cite{M.April2017Haenggi} to 
approximate the distribution of the users. Specifically, $\Phi_{I}$ are modeled by an inhomogeneous PPP with density $\zeta_{I}$. The  functions of the density $\zeta_{I}$, the PDF of $l_{0}$, i.e., $f_{l_{0}}(r)$, and that of $l_{j}, \forall j \in \Phi_{I}$, i.e., $f_{l_{j}}(l)$, are given, respectively, as %(4), (5), (6) of \cite{Y.2017Wang}, respectively.
\begin{align}  
\zeta_{I}(d_{j}) = \zeta_{B} \Big(1-\exp\big(C_{1} \zeta_{B} \pi\|d_{j}\|^2
\big)\Big), \nonumber
\end{align}
\vspace{-2mm}
\begin{align} %\label{pdf_L0}
  f_{l_{0}}(r)= 2 C_{2} \pi \zeta_{B} r \exp \Big(\! -   C_{2} \zeta_{B} \pi r^2 \Big), \nonumber
\end{align}  
and 
\begin{align}  %\label{pdf_Lj}
 \hspace{-3mm} \quad f_{l_{j}|d_{j}}(r) = \frac{2C_{2} \pi \zeta_{B} r \exp \big( \! - C_{2} \pi \zeta_{B} r^2 \big)  }{1-\exp\big(\! - C_{2}  \pi \zeta_{B} d^2_{j} \big)}, \, 0 \leq r \leq d_{j}, \nonumber
\end{align} 
where $C_{1}=\frac{12}{5}$ and $C_{2}=\frac{13}{10}$.

%Then, based on (\ref{eqn:CPP_II}), the conditional 
From the definition in (\ref{def}), the coverage probability  %under type II HARQ  
can be expressed as
 \begin{align}
 %&  \mathbf{C}^{\mathrm{II},\mathrm{A}}_{\mathrm{GFPC}}  %| \Xi^{\mathrm{A}} \! = \! \mathbb{E}_{ \eta^{(\mathrm{T}) }  } \! \big[ \mathbf{C}^{n}_{\mathrm{II}} | \Xi^{\mathrm{A}}\!, \eta^{(\mathrm{T}) }  \big] \nonumber \\ 
  \mathbf{C}^{\mathrm{A}}_{\mathrm{GFPC}}  \! =  \!  \mathbf{C}^{(\mathrm{T})}_{\mathrm{GFPC}} 
 \! + \! \mathbb{E} %_{\eta^{(\mathrm{T})}} \! 
 \big[   %\underbrace{
 \mathbb{P} [ \eta^{(\mathrm{R})} \! > \!  \tau \!- \! \eta^{(\mathrm{T})} | \Xi^{\mathrm{A}} ]  \mathbbm{1}_{ \{\eta^{(\mathrm{T})} \!  \leq    \tau  \} }      \big] , \nonumber  \hspace{-5mm}   
 \end{align}
where $\mathbb{E}$ denotes the expectation operator, $\mathbbm{1}_{ \{ \cdot \}}$ is the inductor function, $\mathbf{C}^{(\mathrm{T})}_{\mathrm{GFPC}} $ represents the coverage probability of an initial transmission, %under the GFPC, 
which is obtained as (\ref{eqn:S}) following similar derivations to 
the proof of \textbf{Theorem 1} in~\cite{Y.2017Wang}, and the second term in the above equation can be derived as
 % and $\mathcal{M}_{s}(n,\tau)$ has been given in (\ref{eqn:M_S}). % and the second term in the cases with and without temporarily correlated interference, respectively, can be obtained by averaging $\mathbf{C}_{\mathrm{II}}  \mathbbm{1}_{ \{ \eta^{(\mathrm{T})}  \leq \tau \} } $ over $\Xi^{\mathrm{A}}$ as follows
\begin{align}
& \mathbb{E} %_{\eta^{(\mathrm{T})}} \! 
 \big[   %\underbrace{
 \mathbb{P} [ \eta^{(\mathrm{R})} \!> \!  \tau \!- \! \eta^{(\mathrm{T})}  ]  \mathbbm{1}_{ \{\eta^{(\mathrm{T})}   \leq    \tau  \} }      \big] \nonumber \\
& \! \! = \! \mathbb{E}_{\Xi^{\mathrm{A}}\!,\bm{h}^{(\!\mathrm{R}\!)}   } \! \Bigg[ \mathbb{P} \bigg[ \frac{h_{0}^{\!(\mathrm{R})} 
  l^{\alpha (\epsilon-1 )  }_{0} }{ \sum_{j \in \Phi^{(\mathrm{R})}_{\!I} }  \! h_{j}^{\!(\mathrm{R})}  %^{(\mathrm{R})}_{0} 
  l^{\alpha \epsilon  }_{j} d^{-\alpha}_{j}   } \!  > \! \tau \! - \! \eta^{(\mathrm{T})}  \!  \bigg] \! \mathbbm{1}_{    \{  \eta^{(\mathrm{T})}   \leq    \tau \!  \} }   \! \Bigg]  \nonumber \\
& \!\! \overset{\text{(b)}}{=} \!  %\prod_{i \in \Phi^{(\mathrm{R})}_{\!I}} \!\!\! \Big( 1   +  o(\tau  \! - \!  \eta^{(\mathrm{T}) }\!, \tau)    l_{0}^{ \alpha (1-\epsilon ) } l_{i}^{\alpha \epsilon  } d^{-\alpha }_{i}   \Big)^{\!\!-1}  
\mathbb{E}_{\Xi^{\mathrm{A}}\!}\Bigg[\!  \int^{\tau}_{0} \!\!\!\!  \prod_{j \in \Phi^{(\mathrm{R})}_{\!I}} \! \!\!\!  \Big(   1  \! + \! (\tau  \! - \!  \eta^{(\mathrm{T}) } \! )  \frac{  l_{0}^{ \alpha  } l_{j}^{\alpha \epsilon  } }{ d^{ \alpha }_{j} l_{0}^{ \alpha  \epsilon   }  } \Big)^{\!\!-1} \! \!   f_{\eta^{(\mathrm{T})}  }(\eta) \mathrm{d} \eta \Bigg]
, \!\! \label{eqn:CPP_II}
\end{align}
%To obtain the second term of (\ref{eqn:MA_II}), we need to determine the distribution of $\eta^{(\mathrm{T})}$. Given $\Xi^{\mathrm{A}}$, we have the conditional PDF of $\eta^{(\mathrm{T})}$ derived as
%PDF 
where (b) holds as $h_{i}, \forall i \in \Phi_{U}$ is exponentially distributed and  $f_{\eta^{(\mathrm{T})}  }(\eta)$ represents the PDF of $\eta^{(\mathrm{T})}$     %conditioned on $\Xi^{\mathrm{A}}$
 calculated as  (\ref{eq:II_PDF}) shown on the top of the next page.
\begin{comment}
 \begin{align}
  &\hspace{-2mm}  f_{ \eta^{(\mathrm{T})}    } (\eta)  % = \partial F_{ \eta^{(\mathrm{T})} | \Xi^{\mathrm{A}} }(\eta) / \partial \eta   
  =  \partial \mathbb{P} [  \eta^{(\mathrm{T})} \! \leq \! \eta \big    ]  /\partial \eta   
 \nonumber \\
 & \hspace{-2mm} =  \partial \bigg( \! 1  \! - \! \mathbb{E}_{\bm{h}^{(\mathrm{T} ) } } \! \bigg[ \exp \! \Big( \! - \! \eta \!  \sum_{j \in \Phi^{(\mathrm{T})}_{\!I}} \!\! h_{j}^{\! (\mathrm{T})}  l_{0}^{ \alpha (1-\epsilon ) } l_{j}^{\alpha \epsilon  } d^{-\alpha }_{j}   \Big) \bigg] \bigg) / \partial \eta  \nonumber \\
 & \hspace{-2mm} = \partial \Big( 1 - \prod_{j \in \Phi^{(\mathrm{T})}_{\! I}} \!  \!  \Big( 1   +  \eta  l_{0}^{ \alpha (1-\epsilon ) } l_{j}^{\alpha \epsilon  } d^{-\alpha }_{j}    \Big)^{\!\!-1} \Big) / \partial \eta  \nonumber \\
 & \hspace{-2mm} = \!\!  \sum_{j \in \Phi^{(\mathrm{T})}_{\!I} } \! 
 \frac{l_{0}^{ \alpha( 1 - \epsilon ) } l_{j}^{\alpha \epsilon  } d^{-\alpha}_{j}}{( 1 \! + \! \eta l_{0}^{ \alpha( 1 - \epsilon ) } l_{j}^{\alpha \epsilon  } d^{-\alpha}_{j} )^{2}}  \!\! \prod_{  j^{\prime} \in \Phi^{\!(\mathrm{T})}_{\!I}\!\! /\{j\}  }  \!\!\!\!\!\!\!\! \Big( 1 \! + \! \eta \frac{l^{ \alpha  }_{0}  l^{  \alpha \epsilon} _{j^{\prime}} }{d^{\alpha}_{j^{\prime}}  l^{\alpha\epsilon}_{0} }   \! \Big)^{\!\!-1} \!\!  .  \hspace{-2mm}    
  \label{eq:II_PDF}
   %\vspace{-2mm}
 \end{align} 
 \end{comment}
\begin{figure*}
\begin{align}  
 &\hspace{-2mm}  f_{ \eta^{(\mathrm{T})}    } (\eta)  % = \partial F_{ \eta^{(\mathrm{T})} | \Xi^{\mathrm{A}} }(\eta) / \partial \eta   
 =  \partial \mathbb{P} [  \eta^{(\mathrm{T})} \! \leq \! \eta \big    ]  /\partial \eta   
  =  \partial \bigg( \! 1  \! - \! \mathbb{E}_{\bm{h}^{(\mathrm{T} ) } } \! \bigg[ \exp \! \Big( \! - \! \eta \!  \sum_{j \in \Phi^{(\mathrm{T})}_{\!I}} \!\! h_{j}^{\! (\mathrm{T})}  l_{0}^{ \alpha (1-\epsilon ) } l_{j}^{\alpha \epsilon  } d^{-\alpha }_{j}   \Big) \bigg] \bigg) / \partial \eta  \nonumber \\
& \hspace{-2mm} = \partial \Big( 1 - \prod_{j \in \Phi^{(\mathrm{T})}_{\! I}} \!  \!  \Big( 1   +  \eta  l_{0}^{ \alpha (1-\epsilon ) } l_{j}^{\alpha \epsilon  } d^{-\alpha }_{j}    \Big)^{\!\!-1} \Big) / \partial \eta    = \!\!  \sum_{j \in \Phi^{(\mathrm{T})}_{\!I} } \! 
\frac{l_{0}^{ \alpha( 1 - \epsilon ) } l_{j}^{\alpha \epsilon  } d^{-\alpha}_{j}}{( 1 \! + \! \eta l_{0}^{ \alpha( 1 - \epsilon ) } l_{j}^{\alpha \epsilon  } d^{-\alpha}_{j} )^{2}}  \!\! \prod_{  j^{\prime} \in \Phi^{\!(\mathrm{T})}_{\!I}\!  \backslash \{j\}  }  \!\!\!\!\!\!\!\! \Big( 1 \! + \! \eta \frac{l^{ \alpha  }_{0}  l^{  \alpha \epsilon} _{j^{\prime}} }{d^{\alpha}_{j^{\prime}}  l^{\alpha\epsilon}_{0} }   \! \Big)^{\!\!-1} \!\!  .  \hspace{-2mm}    
 \label{eq:II_PDF}
  \vspace{-0mm}
\end{align}    \vspace{-0mm}
\hrulefill
\end{figure*}

\begin{figure*}
\begin{align}  
 &  \hspace{-2mm} %\mathbf{C}^{\mathrm{A}}_{\mathrm{GFPC}} |  \eta^{(\mathrm{T})} \!\leq\! \tau  \!  = \! 
 \mathbb{E} %_{\Xi^{\mathrm{A}}} 
  \Bigg[   \!   \int^{ \infty}_{ 0} \!\!\!\!  \prod_{k \in \Phi^{(\mathrm{R})}_{\!I}} \!\!\!\! \Big(  1 \! + \! (\tau \! - \! \eta   )   
\frac{l^{ \alpha  }_{0}  l^{  \alpha \epsilon} _{k} }{d^{ \alpha}_{k}  l^{\alpha\epsilon}_{0} }   \Big)^{\!\!-1}   \!\!\! \sum_{j \in \Phi^{ (\mathrm{T})}_{\!I} } \! \!  \frac{l_{0}^{ \alpha( 1 - \epsilon ) } l_{j}^{\alpha \epsilon  } d^{-\alpha}_{j}}{( 1 \! + \! \eta l_{0}^{ \alpha( 1 - \epsilon ) } l_{j}^{\alpha \epsilon  } d^{-\alpha}_{j} )^{2}} 
\!\!\!   \prod_{   j^{\prime} \in \Phi^{(\mathrm{T})}_{\!I}\! \backslash \{j\}  }  \!\!\!\!\!\!   \Big( 1 \! + \! \eta 
\frac{l^{ \alpha  }_{0}  l^{  \alpha \epsilon}_{j^{\prime}} }{d^{ \alpha}_{j^{\prime}}  l^{\alpha\epsilon}_{0} }  \Big)^{\!\!-1} \!  \mathrm{d} \eta \Bigg]  \nonumber\\
& \hspace{-2mm} \overset{\text{(c)}}{=} \! \begin{dcases}  \int^{\tau}_{0}    \!   \mathbb{E} %_{\Xi^{\mathrm{A}}} \! 
\Bigg[ \sum_{j \in \Phi _{\!I} } \!  
g_{\eta} \Big(   \frac{l^{ \alpha  }_{0}  l^{ \alpha \epsilon}_{j} }{d^{ \alpha}_{j}  l^{\alpha\epsilon}_{0} }  \! \Big)  %\! \!  \prod_{k \in \Phi^{(\mathrm{R})}_{\!I} }  \!\!\!   
 f_{ \tau-\eta }   \Big(    
\frac{l^{ \alpha  }_{0}  l^{  \alpha \epsilon} _{j} }{d^{ \alpha}_{j}  l^{\alpha\epsilon}_{0} } \!  \Big)  \mathbb{E}_{\Xi^{\mathrm{A}} } \bigg[ \prod_{  k \in \Phi _{\!I}\! \backslash \{j\}   }  \!\!\! f_{\eta} \Big( \frac{l^{ \alpha  }_{0}  l^{  \alpha \epsilon} _{k} }{d^{ \alpha}_{k}  l^{\alpha\epsilon}_{0} }    \Big)  f_{ \tau-\eta }   \Big(    
\frac{l^{ \alpha  }_{0}  l^{  \alpha \epsilon} _{k} }{d^{ \alpha}_{k}  l^{\alpha\epsilon}_{0} }  \! \Big) \! \bigg] \!  \Bigg]     
\mathrm{d} \eta , \quad \mathrm{A} = \mathrm{Q}   \\
\int^{\tau}_{0}    \!   \mathbb{E} %_{\Xi^{\mathrm{A}}} \! 
\Bigg[ \sum_{ j \in \Phi^{(\mathrm{T})}_{\!I} } \! \!  
g_{\eta} \Big(   \frac{l^{ \alpha  }_{0}  l^{  \alpha \epsilon} _{j} }{d^{ \alpha}_{j}  l^{\alpha\epsilon}_{0} }  \! \Big)  %\! \!  \prod_{k \in \Psi^{(\mathrm{R})}_{\!I} }  
  \mathbb{E}_{\Xi^{\mathrm{A}} } \bigg[ \!\!  \prod_{  j^{\prime} \in \Phi^{(\mathrm{T})}_{\!I}\! \backslash \{j\}   } \!\! \!\!\! f_{\eta} \Big( \frac{l^{ \alpha  }_{0}  l^{  \alpha \epsilon} _{j^{\prime}} }{d^{ \alpha}_{j^{\prime}}  l^{\alpha\epsilon}_{0} }    \Big) \!\!\!  \prod_{  k \in \Phi^{(\mathrm{T})}_{\!I}    } \!\!  f_{ \tau-\eta }   \Big(    
\frac{l^{ \alpha  }_{0}  l^{  \alpha \epsilon} _{k} }{d^{ \alpha}_{k}  l^{\alpha\epsilon}_{0} }  \! \Big)   \bigg]  \! \Bigg]     
\mathrm{d} \eta , \hspace{12mm} \mathrm{A} = \mathrm{F}   \\
\end{dcases} \nonumber \\
&  \hspace{-2mm} \overset{\text{(d)}}{=}   \begin{dcases}  \int^{\tau}_{0}\!     \mathbb{E}_{l_{0}} \! \Bigg[  %_{\Xi^{\mathrm{A}}}  
%\int^{\infty}_{0} \!\!\! \Bigg( 
2 \pi\!\! \int^{\infty}_{0} \!\!\! \zeta_{I}(x)    \! \int^{\infty}_{x} \!\!         
g_{\eta} \Big(    \frac{l^{ \alpha  }_{0}  l^{  \alpha \epsilon}  }{x^{ \alpha}   l^{\alpha\epsilon}_{0} } \! \Big)  f_{ \tau-\eta }   \Big(    \frac{l^{ \alpha  }_{0}  l_{k}^{  \alpha \epsilon}  }{x^{ \alpha}   l^{\alpha\epsilon}_{0} }  \!  \Big)   f_{l_{j}}(l)  \mathrm{d} l \mathrm{d} x \cdot  \underbrace{\mathbb{E}_{\Xi^{\mathrm{A}}} \! \bigg[ \!   
  \prod_{  k \in \Phi^{(\mathrm{T})}_{\!I} \! \backslash \{j\}   }  \!\!\!\!\! f_{\eta} \Big(   \frac{l^{ \alpha  }_{0}  l_{k}^{  \alpha \epsilon}  }{d_{k}^{ \alpha}   l^{\alpha\epsilon}_{0} }  \! \Big)  f_{ \tau-\eta }   \Big(    \frac{l^{ \alpha  }_{0}  l_{k}^{  \alpha \epsilon}  }{d_{k}^{ \alpha}   l^{\alpha\epsilon}_{0} }  \!  \Big)   \! \bigg]}_{:= \mathbf{S}^{\mathrm{Q}}} \! \Bigg]   
 \mathrm{d} \eta,  \quad \!\!\!\! \mathrm{A} = \mathrm{Q} \hspace{-2mm}   \\ %
 \int^{\tau}_{0}    \!    \mathbb{E}_{l_{0}} \!  
 \Bigg[ 2 \pi\!\! \int^{\infty}_{0} \!\!\! \zeta_{I}(x)    \! \int^{\infty}_{x} \!\!         
 g_{\eta} \Big(    \frac{l^{ \alpha  }_{0}  l^{  \alpha \epsilon}  }{x^{ \alpha}   l^{\alpha\epsilon}_{0} } \! \Big)    f_{l_{j}}(l)  \mathrm{d} l \mathrm{d} x \cdot    \underbrace{\mathbb{E}_{\Xi^{\mathrm{A}}} \! \bigg[ \!   
   \prod_{  j^{\prime} \in \Phi^{(\mathrm{T})}_{\!I} \!  \backslash \{j\}   }  \!\!\!\!\! f_{\eta} \Big(   \frac{l^{ \alpha  }_{0}  l_{j^{\prime}}^{  \alpha \epsilon}  }{d_{j^{\prime}}^{ \alpha}   l^{\alpha\epsilon}_{0} }  \! \Big) \bigg]  \mathbb{E}_{\Xi^{\mathrm{A}}} \! \bigg[    \prod_{  k \in \Phi^{(\mathrm{T})}_{\!I}     }  \!\! f_{ \tau-\eta }   \Big(    \frac{l^{ \alpha  }_{0}  l_{k}^{  \alpha \epsilon}  }{d_{k}^{ \alpha}   l^{\alpha\epsilon}_{0} }  \!  \Big)   \! \bigg] }_{ := \mathbf{S}^{\mathrm{F}}}   \! \Bigg]    
  \mathrm{d} \eta,  \quad \!\! \mathrm{A} = \mathrm{F} \hspace{-5mm}  
 \end{dcases} \hspace{-5mm} \label{eqn:C_3} 
 \vspace{-0mm}
\end{align}
\vspace{0pt} \hrulefill
\end{figure*}  
  
Plugging (\ref{eq:II_PDF}) into  
(\ref{eqn:CPP_II}) and applying the transformations $u \! = \! (\frac{ l  }{ l_{0}})^2$ and $v\!= \!(\frac{d}{l_{0}})^2$ yields (\ref{eqn:C_3}) shown on the next page, where (c) follows the substitution $f_{a}(x)\!=\!(1+ax)^{-1}$ and $g_{a}(x) \!=\! x(1+ax)^{-2}$, (d) applies the Campbell Mecke  formula~\cite{M.2013Haenggi}, and by following the probability generating functional for a PPP, $\mathbf{S}^{\mathrm{A}}$ can be further derived as \vspace{2mm} %(\ref{eqn:S_A})
\begin{align} 
& \hspace{-2mm} \mathbf{S}^{\mathrm{A}} \!\! = % \overset{\text{(a)}}{=}   
\!\! \begin{dcases} \!  %\mathbb{E}_{l_{j}} \bigg[ 
\exp \! \bigg( \!\! - \! 2 \pi \!\!\! \int^{\infty}_{0} \! \! \! \zeta_{I}(x) \!\! \int^{\infty}_{x} \!\! \bigg[   1 \! - \! f_{\tau-\eta}   \Big(    
\frac{l^{ \alpha  }_{0}  l^{  \alpha \epsilon}  }{ x^{ \alpha}   l^{\alpha\epsilon}_{0} }     \Big)    \\
\hspace{37mm} \times  f_{\eta} \Big(  \frac{l^{ \alpha  }_{0}  l^{  \alpha \epsilon}   }{x^{ \alpha}   l^{\alpha\epsilon}_{0} }    \! \Big) \!    \bigg] \mathrm{d} l \mathrm{d} x  \! \bigg)  , \hspace{1mm} \mathrm{A}\! = \! \mathrm{Q}     \\ 
 \exp \! \bigg( \!\! - \! 2 \pi \!\! \int^{\infty}_{0} \! \! \! \zeta_{I}(x) \!\!\! \int^{\infty}_{x} \!\!  \bigg[   1 \! - \! f_{\tau-\eta}   \Big(      
\frac{l^{ \alpha  }_{0}  l^{  \alpha \epsilon}  }{x^{ \alpha}  l^{\alpha\epsilon}_{0} }   \! \Big)    \!    \bigg] \mathrm{d} l \mathrm{d} x  \! \bigg)    \\
%\mathbb{E}_{l_{j}} \!  \bigg[ \! 
\! \times \! \exp \! \bigg( \!\! - \! 2 \pi \!\! \int^{\infty}_{0} \! \!\!\! \zeta_{I}(x) \!\! \int^{\infty}_{x} \!\! \bigg[  1 \! - \!  f_{\eta} \Big(    
\frac{l^{ \alpha  }_{0}  l^{  \alpha \epsilon}   }{ x^{ \alpha}   l^{\alpha\epsilon}_{0} }   \! \Big)     \!    \bigg]\mathrm{d} l \mathrm{d} x  \! \bigg) \!   ,  \hspace{1mm} \mathrm{A} \!=\! \mathrm{F} \vspace{3mm}    
\end{dcases}   \nonumber \vspace{3mm}
\end{align}
where the interfering users for the initial transmission and the retransmission are averaged over the same inhomogeneous PPP and two independent PPPs for the cases with QSI and FVI, respectively.

Then, by %averaging (\ref{eqn:C_3}) over based on $f_{l_{0}}(r)$
inserting $f_{l_{0}}(r)$ into (\ref{eqn:C_3}) and changing the variable $t= C_{2} \pi \zeta_{B} r^2 $ and $B_{\!\widehat{P}}=C_{2}\pi\zeta_{B} (\frac{\widehat{P}}{\varrho} )^{ \frac{2}{\alpha \epsilon}}$,  the final results in (\ref{eqn:M_II}) can be obtained after some mathematical simplification.      
\end{proof}

%Then, by plugging $G_{a}(x,y) \! := \! x f_{a}(x)   - \! y f_{a}(y)$  and $ Q_{a,b}(x,y) \! :=  \! f_{(\tau-a)^{+}} ( x )^{b } f_{a}  (  x)   -   f_{(\tau-a)^{+}} ( y )^{b } f_{a}  ( y)  $ into (\ref{eqn:MII_C}) and (\ref{eqn:MII_U}), the final results in (\ref{eqn:M_II_P0}) directly yields.     

%where $ S_{1}(x) = \big(1 + x l^{\alpha (1-\epsilon)}_{0}   \big)^{-1} $，  $ S_{1}(x) = \big(1 + (\tau-\eta)^{+} l_{i}^{ \alpha \epsilon  } d^{-\alpha}_{i} l^{\alpha (1-\epsilon)}_{0}   \big)^{-1} = \Big(1 + (\tau-\eta)^{+} \big(\frac{l_{i}}{l_{0}}\big)^{ \alpha \epsilon  }  \big(\frac{l_{0}}{d_{i}}\big)^{\alpha}   \Big)^{-1}$

%M_II 
      
%Given that the first transmission is unsuccessful, the conditional coverage probability under type II HARQ can be derived as

%where (a) applies the Campbell Mecke formula, (b) follows the probability generating functional for a PPP, and (c) follows similar steps in the proof of (\ref{eqn:M_II_GFPC}).

\vspace{2mm}

\end{document}